\documentclass[pdflatex,sn-mathphys-num,]{sn-jnl}

\usepackage{graphicx}%
\usepackage{multirow}%
\usepackage{amsmath,amssymb,amsfonts,physics}%
\usepackage{amsthm}%
\usepackage{mathrsfs}%
\usepackage[title]{appendix}%
\usepackage{xcolor}%
\usepackage{textcomp}%
\usepackage{manyfoot}%
\usepackage{booktabs}%
\usepackage{algorithm}%
\usepackage{algorithmicx}%
\usepackage{algpseudocode}%
\usepackage{listings}%
\usepackage{siunitx}
\usepackage{ulem}

\newcommand{\che}{{\footnotesize CHECK} }
\newcommand{\est}{{\footnotesize ESTIMATION} }
\newcommand{\ave}[1]{\langle#1\rangle}

\newcommand{\cP}{{\mathcal{P}}}

\newcommand{\X}{{\mathsf{X}}}

\newcommand{\ii}{{\rm i}}
\makeatletter
\let\X\@undefined
\makeatother
\newcommand{\X}{{\mathcal{X}}}

\renewcommand{\tr}[1]{{\rm Tr}[#1]}

\newcommand{\mse}[1]{{\rm MSE}_{#1}}

\theoremstyle{thmstyleone}%
\newtheorem{theorem}{Theorem}

\theoremstyle{thmstyletwo}%

\theoremstyle{thmstylethree}%

\begin{document}

\title[Article Title]{Certification of Network Quantum Sensing}

\author*[1]{\fnm{Matteo} \sur{Rosati}}\email{matteo.rosati@uniroma3.it}

\author[2]{\fnm{Gabriele} \sur{Bizzarri}}\email{gabriele.bizzarri@uniroma3.it}

\author[2,3,4]{\fnm{Marco} \sur{Barbieri}}\email{marco.barbieri@uniroma3.it}

\affil*[1]{\orgdiv{Dipartimento di Ingegneria Civile, Informatica e delle Tecnologie Aeronautiche}, \orgname{Universit\`a degli Studi Roma Tre}, \orgaddress{\street{Via della Vasca Navale 79}, \city{Rome}, \postcode{00146}, \country{Italy}}}

\affil[2]{\orgdiv{Dipartimento di Scienze}, \orgname{Universit\`a degli Studi Roma Tre}, \orgaddress{\street{Via della Vasca Navale 84}, \city{Rome}, \postcode{00146}, \country{Italy}}}

\affil[3]{\orgdiv{Istituto Nazionale di Ottica}, \orgname{CNR}, \orgaddress{\street{Largo E. Fermi 6}, \city{Firenze}, \postcode{50125}, \country{Italy}}}

\affil[4]{\orgdiv{INFN}, \orgname{Sezione di Roma Tre}, \orgaddress{\street{Via della Vasca Navale 84}, \city{Rome}, \postcode{00146}, \country{Italy}}}

\abstract{The distribution of quantum sensors on quantum networks is a key enabler of quantum technologies in interferometry, gravimetry, timekeeping, biological monitoring, and beyond. Yet, guaranteeing the security of these distributed sensors over noisy, insecure networks remains a formidable challenge. Previous efforts to combine quantum metrology and cryptography have encountered an apparently unavoidable tension, proposing bounds for security which are only loosely tied to the achievable measurement performance.
Here we introduce a quantum remote sensing protocol that can rigorously certify privacy and integrity of the estimation. By employing offline bilateral Pauli-twirling, our approach forces the effective quantum channel into a Bell-diagonal form, independently of the attack. Surprisingly, this also preserves metrological sensitivity without introducing additional experimental overhead. Relying solely on public communication alongside an insecure quantum link, the protocol enables legitimate users to exactly quantify their estimation error relative to an eavesdropper controlling the channels. 
 We experimentally demonstrate this framework by estimating an optical phase using entangled photons, observing that the users' precision consistently surpasses the eavesdropper's capabilities across a broad parameter regime. By unifying quantum cryptography and metrology, our results provide a practical pathway to achieve simultaneous quantum-limited precision and rigorous information security in real-world quantum networks.}

\maketitle

\section{Introduction}\label{sec1}

The transition of the quantum internet from idea to reality promises a profound shift in how quantum information is treated: a unified network where communication, delegated computation, and distributed sensing seamlessly intertwine~\cite{Kimble2008,Zhang2026,Ge2025,Kim2024,Liu2024d,Liu2024a,Dalvit2024,Malia2022a,Zhao2021d,Guo2020,Gessner2020,Proctor2018,Hassani2024,Alushi2025,Rosati2022,Rosati2021,Barz2013,Polacchi2023,Wei2025}, as routinely happens in modern classical networks~\cite{Dong2023,Chen2025a,Tao2026}. Among these capabilities, distributed quantum metrology is emerging as a transformative frontier: by coordinating spatially separated entangled nodes, networked quantum sensors unlock qualitatively new capabilities.
for interferometry, gravimetry, timekeeping, energy technology, and biological monitoring~\cite{Stas2026,Novikov2025,Crawford2025,Aslam2023,Stray2022,Komar2013}. Yet, as quantum networks scale toward real-world deployment, a critical imperative emerges: protecting the integrity and privacy of the estimated data. Integrating high-precision remote sensing with quantum-safe, physical-layer cryptography is essential for monitoring sensitive quantities, such as strategic geospatial intelligence and private biometric information. Unfortunately, achieving this integration represents far more than an engineering hurdle; it exposes a fundamental physical tension between two opposing operational paradigms. 

Quantum metrology is fundamentally a science of characterization: it demands comprehensive modelling of the physical system and alignment of probes and measurements to maximize sensitivity to target parameters~\cite{Giovannetti2011,Pirandola2018a,Marciniak2022}. Cryptography, conversely, is a science of obfuscation: to preclude an eavesdropper from exploiting physical or logical trapdoors, cryptographic protocols deliberately minimize trust in the system and routinely employ randomization~\cite{Portmann2022,Pirandola2020,Renner2007,Yin2020a,Paraiso2021a,Nadlinger2022}. This creates an apparent physical paradox: naively, it seems that a quantum probe cannot be simultaneously aligned, in order to precisely estimate a parameter, and isotropically randomized, in order to remain opaque to malicious interference.

Previous efforts in sensing with private quantum resources (SPQR) have grappled with this dichotomy, implying that one must trade metrological precision and efficiency for cryptographic security. 
Existing proposals have either weakened security guarantees by making explicit, exploitable assumptions about the communication channel~\cite{Kianvash2025a,Moore2023}, or severely underestimated the achievable precision by bounding performance against worst-case deviations~\cite{Bizzarri2025a,Moore2025,Ho2024,Huang2019a}, or introduced demanding resource overheads via full state tomography and trap codes inspired by quantum error correction~\cite{Yin2019,Shettell2022a}.
Collectively, these approaches encapsulate the prevailing perception that high security and precision are inherently competing resources in the presence of an uncharacterized adversary.

In this work, we demonstrate that this tension is not fundamental. We present an SPQR protocol that definitively certifies both optimal estimation precision and robust security, free from the compromises of previous proposals. We resolve the cryptography-metrology conflict through a rather counterintuitive application of Pauli twirling. While twirling is a cornerstone randomization technique in quantum cryptography~\cite{Shor2000,Kraus2005}—used to enforce a predictable, isotropic channel independently of an eavesdropper's actions, we reveal that this global scrambling does not destroy the metrological utility of the system, despite its non-commutation with the parameter dynamics' generator. Namely, by means of a virtual twirling applied to the entangled probe shared between the sender and the parameter encoder, we demonstrate that isotropic cryptographic obfuscation can robustly coexist with the phase-sensitive alignment required for quantum metrology. 

We ground our theoretical framework by applying our protocol to a remote phase-sensing task, providing both rigorous analytical guarantees and an experimental validation. By utilizing Pauli twirling to enforce a strictly characterizable channel, we unlock the ability to analytically compute the exact Cram\'er-Rao bounds for both the legitimate users and the eavesdropper, entirely free from assumptions regarding the adversary's attack strategy.  Crucially, we mathematically prove, through the analysis of sheared covariant parameter families, that the eavesdropper's precision is fundamentally phase-independent, ensuring a fixed security floor. We subsequently validate these theoretical guarantees via a proof-of-principle photonic experiment. Deliberately employing a minimal-resource setup to showcase the protocol's accessibility, e.g., twirling can be implemented in post-processing via a classical frame-update, we observe remarkable agreement with our theoretical predictions. The experiment confirms that the legitimate users maintain a commanding precision advantage over the eavesdropper across a broad parameter regime, directly demonstrating the protocol's practical viability. Our results establish a rigorous and unexpected compatibility between security and sensing, providing a certification blueprint for the future of distributed sensing in secure quantum networks.

\section{Results}\label{sec2}
\subsection{Remote sensing with Pauli-twirled probes}
We consider a quantum network enabling end-to-end connection of a sender, Alice, and a receiver, Bob, as illustrated in Fig.~\ref{fig:concept}. Alice is a central hub with full quantum capabilities, able to prepare, distribute and measure a bipartite entangled state. Bob, instead, is a peripheral node with his quantum capabilities limited to single-party measurements. The objective of remote sensing over the network is for Alice to estimate a parameter $\phi$ precisely, but, crucially, only Bob has local access to the parameter-encoding process she wishes to characterise. He is thus charged with querying the process by using the received quantum state and performing measurements. 

\begin{figure*}
    \centering
    \includegraphics[width=\linewidth]{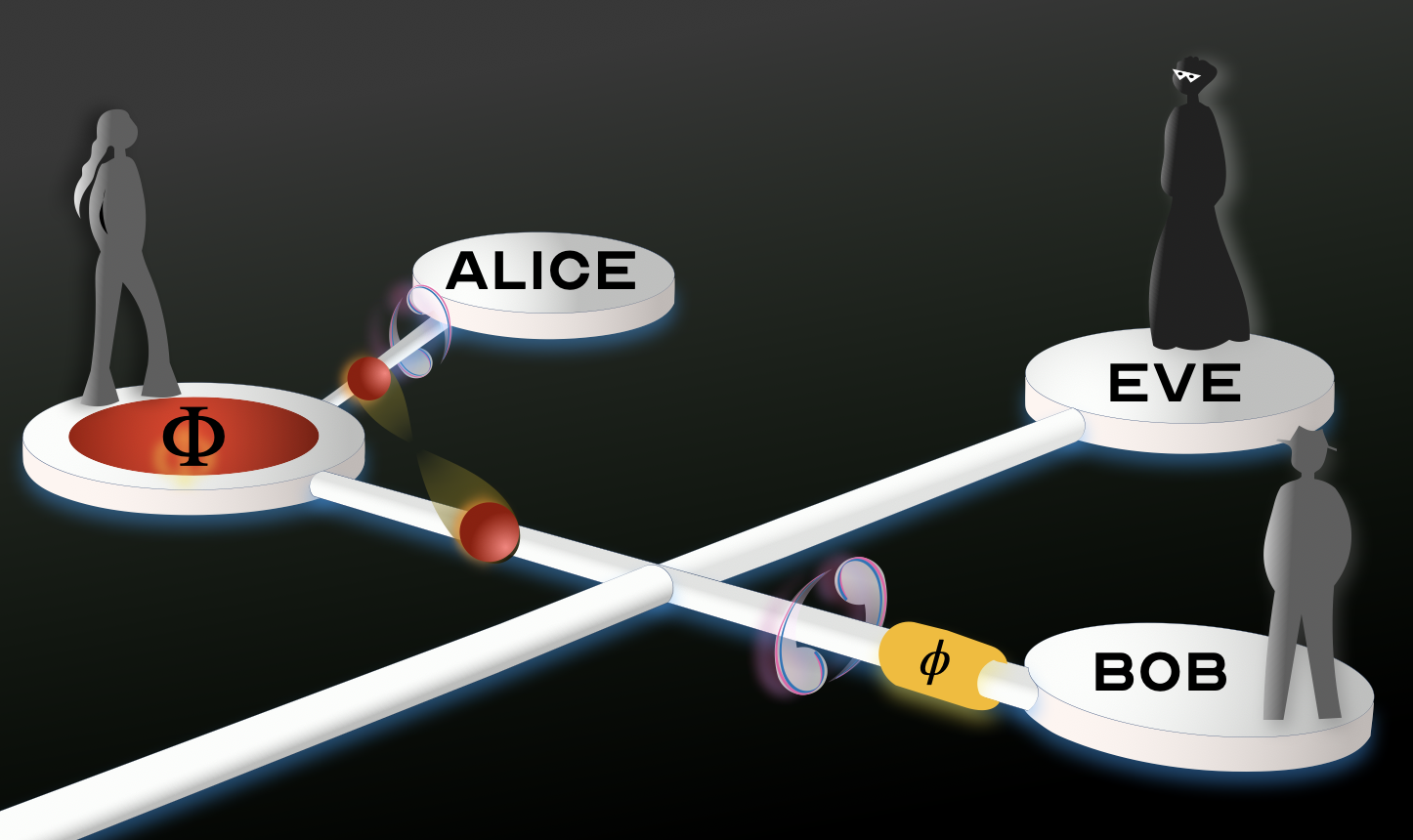}
    \caption{{Concept of our certifiable SPQR setting. Alice is interested in a target parameter $\phi$ only Bob has local access to. Therefore, she distributes part of an entangled state $\Phi$ to Bob over an untrusted quantum channel. At the same time, an adversarial user, Eve, manipulates the network in order to gather as much information as possible about $\phi$ and tamper with the probe received by Bob. We show that Alice and Bob can perform a twirling operation in order to protect against arbitrary channel attacks by Eve, but without compromising the probe's metrological value for sensing.}}
    \label{fig:concept}
\end{figure*}
An adversary, Eve, is interested in retrieving precisely the value of the parameter $\phi$ and disrupting the estimation: she thus controls the quantum network in order to steal the quantum state transmitted through it and tamper with the state received by Bob. As a result of Eve's intervention, without loss of generality, the three share a pure quantum state after the transmission. Alice possesses as tools for her estimation the access to her share of the entangled state and the classical measurement data announced by Bob on a public classical channel. Clearly, in this setting--akin to reverse reconciliation in quantum key distribution (QKD)~\cite{Pirandola2020}--while Eve's and Alice's shares of the state are parameter-independent after Bob's encoding, a public announcement of the measurement outcomes effectively updates their local states with a non-trivial parameter dependence\footnote{Our methods work equally well in the case--akin to direct reconciliation in QKD--where Alice measures and announces outcomes, while Bob uses them to estimate the parameter locally. In this latter case, the estimation is trivially private, but Eve can still affect its precision via tampering.}. In quantum sensing, a central aim is determining the mean square error (MSE) in the estimation of a parameter $\phi$ encoded into a quantum density matrix $\rho(\phi)$. In general terms, this can be done via a positive-operator-valued measurement (POVM) $\{M_{m}\}$, giving the outcome $m$ with probability $p(m|\phi)=\tr{M_{m}\rho(\phi)}$, followed by classical post-processing to produce as estimator $\hat\phi$ with its own distribution $p(\hat \phi)$:  this results in a $\mse{}(\phi) = \int d\hat\phi~ p(\hat \phi) (\hat\phi-\phi)^2$~\cite{helstromBOOK}. If the estimator is unbiased, as it is often assumed, its mean value coincides with the true value, i.e., $\int d\hat\phi~ p(\hat \phi) \hat\phi=\phi$, and the $\mse{}$ corresponds to the ordinary variance.

Differently from the standard setting, in our adversarial case, we have a dual interest. On the one hand, Alice needs to evaluate the MSE for her specific choice of measurement strategy; this can be done via the classical Cram\'er-Rao bound~\cite{helstromBOOK}, which bounds the MSE of all estimators $\hat\phi$ for a fixed measurement:
\begin{equation}\label{eq:FI-general}
     \mse{A}(\phi;\{M_{m}\}) \geq\frac{1}{F(p_A)},
\end{equation}
where $F(p) = \sum_m  p(m|\phi) (\partial_\phi\log p(m |\phi))^2$ is the classical Fisher information (CFI) and $p_A(m|\phi)$ is the probability of Alice's measurement on her state $\rho_A(\phi)$ conditional on the value $\phi$ for the parameter.  On the other hand, we also need to lower-bound the MSE attainable by Eve, irrespectively of her measurement strategy; this can be done via the quantum Cram\'er-Rao bound~\cite{helstromBOOK}, which minimizes the CFI over all POVM's:
\begin{equation}
    \mse{E}(\phi)\geq\frac{1}{Q(\rho_E)},
\end{equation}
where $\rho_E(\phi)$ is Eve's state, $Q(\rho) := \tr{\rho(\phi) L(\phi)^2}$ is the quantum Fisher information (QFI),  and $L(\phi)$ is the symmetric-logarithmic-derivative (SLD) operator, defined implicitly as the solution to the Sylvester equation $\{\rho(\phi),L(\phi)\}=2\partial_\phi\rho(\phi)$. The two problems are deeply correlated in that Alice and Eve share a bipartite state after Bob's parameter encoding, measurement and public announcement. 

The tension between cryptography and metrology arises precisely from this duality of objectives:--\emph{security}--the legitimate users want to randomize Eve's local state, in order to minimize her QFI and ensure she has a large estimation error;--\emph{integrity}--simultaneously, they want to ensure that Alice's state retains a well-characterized parameter-dependence, ideally maximizing her CFI. Clearly, both system randomization and characterization are required--to a certain extent--in order to attain good perfomance under both aspects. Yet, the extent to which these are compatible remains largely unclear to date~\cite{Kianvash2025a,Moore2023,Bizzarri2025a,Moore2025,Ho2024,Huang2019a,Yin2019,Shettell2022a}.  

In our proposal, we solve this issue via bilateral Pauli twirling, a central randomization technique used in QKD in order to reduce channel assumptions~\cite{Shor2000,Kraus2005}. If Alice and Bob share qubit states, via public communication they can pick a Pauli matrix at random and apply it to both their local states, forgetting their choice\footnote{In practice, they need not carry out Pauli twirling on their quantum states during the protocol, but rather Alice can implement it in post-processing via Pauli-frame update, i.e., a simple rotation of the measured Pauli correlators, as detailed below.}. The effective average state resulting from this procedure can be equivalently written in Bell-diagonal or Pauli-diagonal form:
\begin{equation}\label{eq:twirled_State_AB}
    \sigma_{AB} = \sum_{\beta=0}^3 p_\beta \dyad{\beta} \equiv \frac14\sum_{\beta=0}^3 t_\beta P_\beta^A\otimes P_\beta^B,
\end{equation}
where $\{p_\beta\}_{\beta=0}^3$ is a probability distribution, $t_0=1$ and $t_{\beta>0}\in[-1,1]$, while $\ket{\beta}_{AB}=P_\beta^B\ket{\Phi_+}_{AB}$ are the Bell states\footnote{Up to a global phase $\ii$ for $\beta=2$.}, having defined $\ket{\Phi_+} = \tfrac{1}{\sqrt{2}}(\ket{00}+\ket{11})$ and the Pauli matrices $P_\beta=I,X,Y, Z$ for $\beta=0,1,2,3$. Thus, Alice and Bob can maintain a certain degree of quantum correlations, determined by the states' coefficients, while forcing Eve's correlations to have a well-defined form. Indeed, assuming without loss of generality that Eve holds the purifying system, we can write the full pure state of the three parties as
\begin{equation}\label{eq:tripartite}
    \ket{\psi}_{EAB} = \sum_{\beta=0}^3 \sqrt{p_\beta} ~\ket{e_\beta}_E \otimes \ket{\beta}_{AB},
\end{equation}
where $\{\ket{e_\beta}\}_{\beta=0}^3$ is an orthonormal basis of Eve's Hilbert space. 

However, while twirling provides Alice and Bob with the means to quantify Eve's effect on the channel and still maintain some quantum correlations, in general it does not commute with the parameter-encoding process: {\it per se} it does not guarantee that high metrological precision can still be attained.
Surprisingly, we will show that this is the case for families of single-parameter unitary encodings $U_\phi=e^{-\ii\phi G}$, taking as a prototypical example the estimation of a phase with generator $G=\tfrac{Z}{2}$. Clearly, while the optimal probe would be a Bell state, the twirled state \eqref{eq:twirled_State_AB} in general retains phase-sensitivity, as the phase-encoding performs a rotation of the $X, Y$ operators.  

In each run of our remote phase-estimation protocol, Alice initially sends half of a Bell state $\ket{\Phi_+}$ through the insecure channel to Bob, who decides at random whether to encode (\est round) or not to encode (\che round) the phase on his local state; they then measure in random Pauli bases $P_\beta P_{\beta'}$ for $\beta,\beta'=1,2,3$. At the end of the runs, Alice and Bob disclose their measurement bases and results for \che rounds, and estimate the diagonal Pauli correlators--akin to the channel parameter-estimation phase in QKD:
\begin{equation}\label{eq:correlators_zero_phase}
     \langle X\otimes X\rangle_0\equiv t_1, \quad \langle Y\otimes Y\rangle_0\equiv t_2, \quad  \langle Z\otimes Z\rangle _0\equiv t_3,
\end{equation}
where the expectation is taken with respect to the phase-independent state $\sigma_{AB}$.
Based on these values, they can calculate exactly Eve's potential QFI in case the \est rounds were disclosed, as detailed below, and determine whether to abort the protocol or proceed. In the latter case, Bob discloses his measurement basis and results for the remaining \est rounds, which Alice can fruitfully employ for the estimation. 
Indeed, after phase-encoding, the value of the diagonal Pauli correlators on the resulting state  $\sigma_{AB}(\phi) = U_\phi \sigma_{AB} U_\phi^\dagger$ can be computed as $\ave{P_\beta\otimes P_\beta}_\phi = \ave{P_\beta \otimes U_\phi^\dagger P_\beta U_\phi}_0$, yielding a residual phase-dependence for Pauli operators non-commuting with the phase-generator:
\begin{equation}\label{eq:correlators-general}
    \langle X\otimes X\rangle_{\phi} = t_1\cos(\phi), \quad
    \langle Y\otimes Y\rangle_{\phi}=  t_2\cos(\phi), \quad  \langle Z\otimes Z\rangle_\phi = t_3.
\end{equation}
For a single $P_\beta P_\beta$ (with $\beta=1,2$) measurement round, after public announcement Alice can reconstruct the global measurement outcome, which behaves as a Bernoulli random variable with  probability $p_\beta(\pm1|\phi) = \tfrac{1\pm t_\beta \cos(\phi)}{2}$. In practice, an estimator whose MSE approaches the performance dictated by \eqref{eq:FI-general} asymptotically in the number of rounds is the maximum-likelihood estimator
\begin{equation}
\label{eq:estimator}
    \hat\phi(t_1,t_2) = \arccos\left(\frac{\langle X\otimes X\rangle_{\phi}}{2 t_1} + \frac{\langle Y\otimes Y\rangle_{\phi}}{2 t_2} \right).
\end{equation}

We provide a demonstration of the protocol at the proof-of-principle level, in order to illustrate the key steps of the certification. The entanglement source is built by employing a photonic quantum gate, see Fig.~\ref{fig:scheme}. Two qubits are written on the polarisation of two photons from the same parametric down-conversion event. The gate itself comprises a partially polarising beam splitter (transmission $T_H\simeq1$ for the horizontal polarisation, $T_V\simeq0.33$ for the vertical polarisation), realising two-photon interference. This effect results in  a $\pi$ phase shift to the component of the quantum state in which both photons are vertically polarised {\color{red}~\cite{Langford2005}}. By choosing appropriate superpositions as the qubit inputs, the entangled state
\begin{equation}
\label{eq:state}
    \ket{\Phi_+}=\frac{1}{\sqrt{2}}\left(\ket{HD}+\ket{VA}\right),
\end{equation}
can be targeted. Here, the notation $\ket{HD}$ ($\ket{VA}$) denotes Alice's photon in the horizontal (vertical) polarisation, and Bob's in the diagonal (antidiagonal) one. Therefore, the encoding of the Pauli matrices follows the following rules: for Alice $X=\ket{H}\bra{H}-\ket{V}\bra{V}$, $Y=\ket{D}\bra{D}-\ket{A}\bra{A}$, while for Bob $X=\ket{D}\bra{D}-\ket{A}\bra{A}$, $Y=\ket{H}\bra{H}-\ket{V}\bra{V}$. For both parties, $Z=\ket{L}\bra{L}-\ket{R}\bra{R}$ in the basis of the Right- and Left-circular polarisations. Imperfect two-photon interference results in a reduction of the coherences in the state \eqref{eq:state}, which we attribute in our analysis to the action of the malevolent Eve.

\begin{figure}
    \centering
    \includegraphics[width=\textwidth]{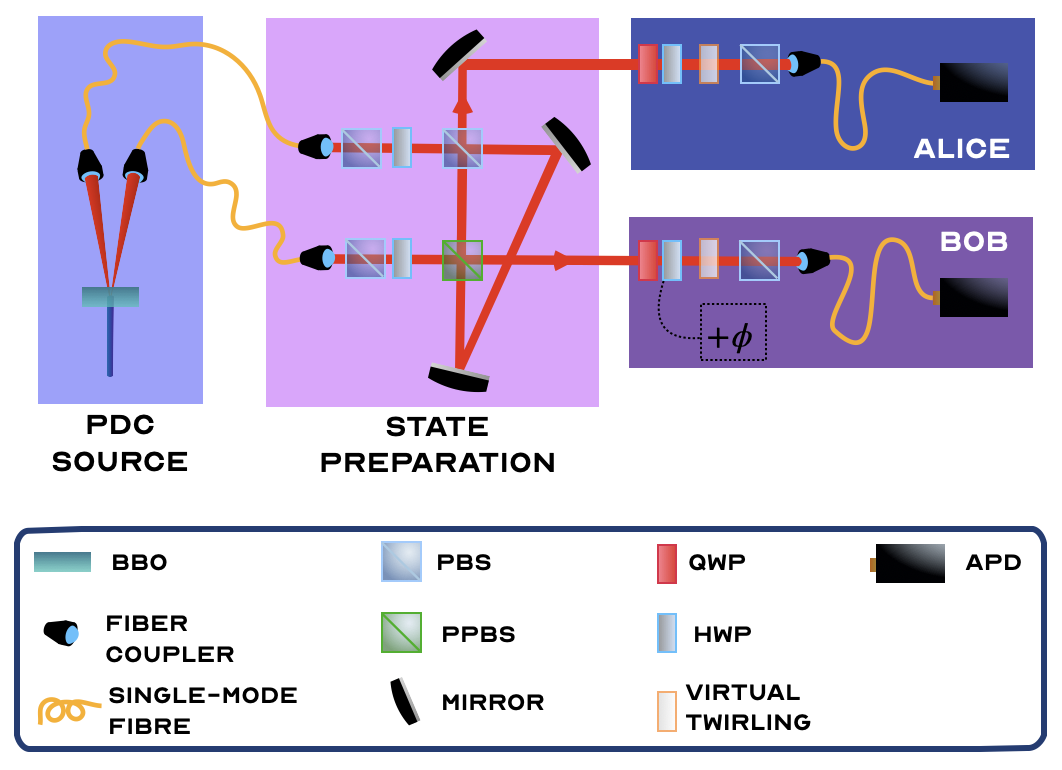}
    \caption{{Photonic platform for SPQR. Photonic qubits are produced by means of a parametric down conversion (PDC) process in a $3\si{\milli\meter}$-thick $\beta$-barium borate (BBO) crystal pumped by a $50 \si{\milli\watt}$-continuous-wave (CW) laser with central wavelength $\lambda_0 = 405 \si{\nano\meter}$. Polarization-encoded qubits are then coupled by means of a photonic gate, producing the state \eqref{eq:state}., starting from the polarisation $\cos (\pi/3)\ket{H}+\sin (\pi/3)\ket{H}$ on both inputs.  The gate comprises a partially polarising beam splitter (PPBS), which allows for polarisation-selective nonclassical interference. This is embedded within a Sagnac interferometer, which reduces the impact of deviations of the actual transmittivities from the ideal ones. Following the generation, the polarization of both Alice's and Bob's photons is measured by means of standard polarimeters, made of a quarter-wave plate (QWP), a half-wave plate (HWP) and a polarising beam splitter (PBS). The target value of the interferometric phase $\phi$ is here encoded in Bob's measurement station. Single-mode fibers are used throughout the apparatus in order to achieve good indistinguishability of the spatial modes. Avalanche photodiodes (APD) finally register the arrival of photons, with only coincidence events being considered - postselection is intrinsic to the functioning of the gate. Pauli twirling is carried out virtually during post-processing of the measurement outcomes.}}
    \label{fig:scheme}
\end{figure}

As customary in the analysis for QKD, the twirling operations are not physically implemented, but rather executed by opportune relabelling. Specifically, given our choice of observables and the correlations in the local Pauli transformations, our observables actually remain unaffected by the twirling. A \che round thus delivers channel parameter estimations for $t_1$, $t_2$ and $t_3$; in order to assess their variability, we adopt $R_{\rm che}=200$ repetitions of the reconstruction of $\langle X\otimes X\rangle$, $\langle Y\otimes Y\rangle$, and $\langle Z\otimes Z\rangle$ by standard polarimetry. We have obtained as average values and standard deviations of the samples: $\bar t_1\pm\Delta t_1=0.911\pm0.043$,  $\bar t_2\pm\Delta t_2=-0.926\pm0.037$, and $\bar t_3\pm\Delta t_3=0.913\pm0.039$.

The parameter of interest is a rotation $\theta$ of Bob's reference, which is equivalent to changing the position of the polarimeter's plates accordingly. In formal terms, this behaves as a phase shift by $\phi=4\theta$~\cite{Langford2005}. 
An \est round thus consists in the measurements of $c_1=\langle X\otimes X\rangle_\phi$ and $c_2=\langle Y\otimes Y\rangle_\phi$ taken with the different orientation, based on a number of $N_{\rm est}$ events. As above, in order to ascertain the experimental uncertainty, we have taken the average and standard deviation of such correlators from $R_{\rm est}=200$ repetitions. These are inserted in the estimator \eqref{eq:estimator} in order to extract a value for the phase $\phi$. We observe that, while in our experiment we performed \che and \est rounds sequentially, for on-field deployments these should be alternated randomly, so as not to leak information to Eve.  The retrieved values of $\phi$ are reported in Fig.~\ref{fig:phases} as a function of the rotation $\theta$: the data are well in agreement with the theoretical prediction within their uncertainties - these will be discussed in detail in the following section - thus we can consider our estimator as unbiased. This demonstrates that the obfuscation provided by twirling can be compatible with the level of system characterisation needed for parameter retrieval.  

\begin{figure}
    \centering
    \includegraphics[width=\textwidth]{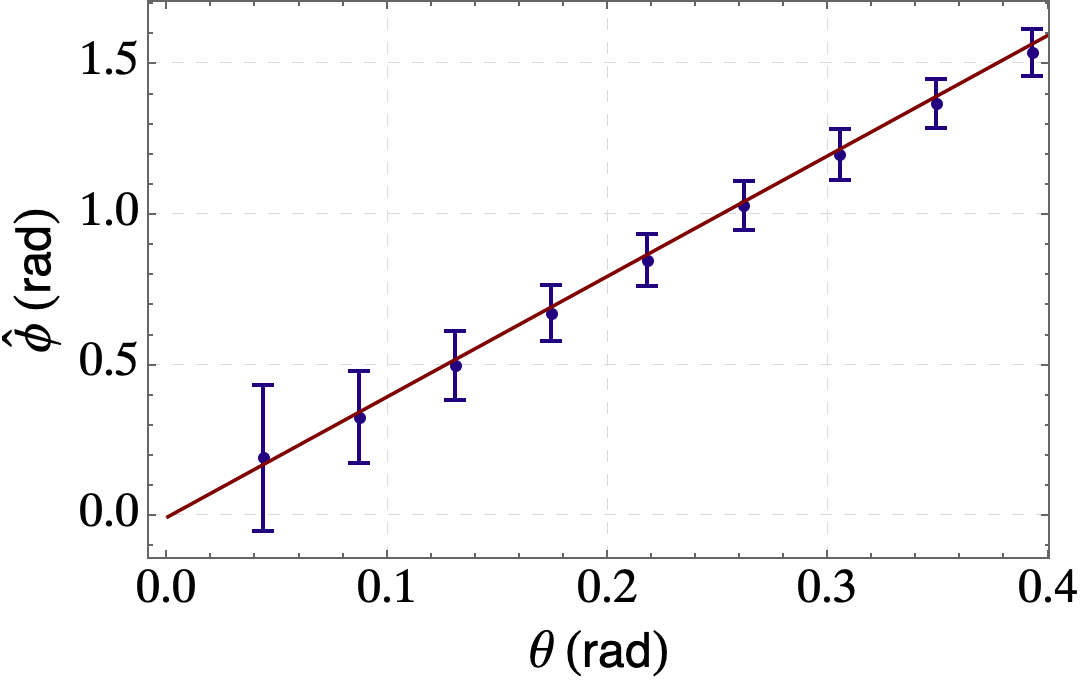}
    \caption{ Estimated phases as a function of the rotation angle $\theta$. Blue points and vertical bars correspond to experimental data with the associated experimental uncertainty. Theoretical prediction is here represented by the red continuous line.}
    \label{fig:phases}
\end{figure}

\subsection{Certification of security and integrity
in remote phase-estimation}

We have thus shown that, thanks to bilateral Pauli twirling, the effect of Eve's tampering is constrained to mixing in the Bell basis. Consequently, it can be estimated precisely in the \che rounds, providing a certification of its impact on the estimation variance, in contrast with previous approaches~\cite{Kianvash2025a,Moore2023,Bizzarri2025a,Moore2025,Ho2024,Huang2019a,Yin2019,Shettell2022a}.   

Indeed, Alice's CFI on $\phi$ can be computed via error-propagation as \begin{equation}\label{eq:FI-single}
    F(p_\beta) = \frac{t_\beta^2\sin^2(\phi)}{1 - t_\beta^2\cos^2(\phi)},
\end{equation}
based on the expression for her conditional detection probabilities $p_\beta(\pm1|\phi)$. Therefore, for a number $n_X$ of $XX$ rounds and $n_Y$ of $YY$ rounds, the CFI of Alice is $F_A(\phi; t_1,t_2) = n_X\,F(p_1) + n_Y\,F(p_2)$, with $n_X+n_Y=N_{\rm est}$.

We now move on to describe synthetically how Alice and Bob can bound the best-case estimation performance of Eve based on the correlators \eqref{eq:correlators_zero_phase} measured during the \che phase, deferring its detailed calculation to the Appendix; 
for simplicity, we assume that Eve performs independent and identically distributed (iid) attacks on each run\footnote{This assumption can be dropped via a quantum de-Finetti argument by discarding a suitable number of rounds, as already discussed in~\cite{Bizzarri2025a}.}. Starting from the tripartite state \eqref{eq:tripartite}, and defining Bob's POVM   $\{\Pi_m\}$, her state conditional on Bob's public announcement is
\begin{equation}\label{eq:conditional_eve_state}
  \rho_{E\mid m}(\phi)
  = \frac{1}{p(m)} \sum_{\beta,\beta'} \sqrt{p_\beta p_{\beta'}}
  \,\langle \beta' | (I\otimes \Xi_m(\phi)) | \beta\rangle\,
  \dyad{e_\beta}{e_{\beta'}},
\end{equation}
where  $\Xi_m(\phi):=U_\phi^\dagger \Pi_m U_\phi$ and the outcome probability is clearly $\phi$-independent:
\begin{equation}
    p(m)=\sum_\beta p_\beta\,\langle \beta | (I\otimes \Xi_m(\phi)) | \beta \rangle
    =\frac12\Tr \Xi_m(\phi)=\frac12{\rm rank}(\Pi_m).
\end{equation}
However,  the post-selected state $\rho_{E\mid m}(\phi)$ generally does depend on $\phi$ non-trivially.
Therefore, Eve’s QFI after the announcement,
$\sum_m p(m)\,Q\big(\rho_{E\mid m}\big)$,
is in general strictly positive.

When Bob measures $X$ with projectors $\Pi_\pm=\tfrac12(I\pm X)$, we have $\Xi_\pm(\phi)=\tfrac12\bigl(I\pm X\cos\phi \pm Y\sin\phi\bigr)$. 
The matrix elements of \eqref{eq:conditional_eve_state} can be computed tediously but straightforwardly by noting that $\langle \beta' | (I\otimes \Xi) | \beta \rangle =\tfrac12\,\tr{\Xi\,P_\beta\, P_{\beta'}}$, yielding
\begin{equation}\label{eq:E-unnormalized-matrix} 
    \rho_{E\mid m}(\phi) =  
    \begin{pmatrix} 
    p_0 & m \sqrt{p_0 p_1}\,\cos\phi & m \sqrt{p_0 p_2}\,\sin\phi & 0 \\ 
    m \sqrt{p_0 p_1}\,\cos\phi & p_1 & 0 & -\ii m\sqrt{p_1 p_3}\,\sin\phi \\ 
    m \sqrt{p_0 p_2}\,\sin\phi & 0 & p_2 & \ii m\sqrt{p_2 p_3}\,\cos\phi \\ 
    0 & \ii m \sqrt{p_1 p_3}\,\sin\phi & -\ii m \sqrt{p_2 p_3}\,\cos\phi & p_3
    \end{pmatrix}\!
\end{equation}
in the $\{\ket{e_\beta}\}$ basis, having noticed that $p(m)=1/2$.  

By direct evaluation, it can be checked that the matrix has rank two and can be written as $\rho_{E|m}(\phi)=\dyad{\varphi_0(\phi)}+\dyad{\varphi_1(\phi)}$, with 
\begin{align}
\ket{\varphi_{0|m}(\phi)}
&:= \sqrt{p_0}\ket{e_0}\ +m \Big(\sqrt{p_1}\cos\phi\,\ket{e_1}+ \sqrt{p_2}\sin\phi\,\ket{e_2}\Big),\nonumber\\
\ket{\varphi_{1|m}(\phi)}
&:= -\ii\sqrt{p_3}\ket{e_3} +m\Big(-\sqrt{p_1}\sin\phi\,\ket{e_1}+ \sqrt{p_2}\cos\phi\,\ket{e_2}\Big).\label{eq:vectors_dec_rho}
\end{align}
non-orthogonal, unnormalized pure states, which simplifies numerical evaluation of its QFI via Gram matrices~\cite{Rosati2021}.

In order to write a simple analytical expression for the QFI, we note further that the vectors \eqref{eq:vectors_dec_rho} can be obtained via the action of a covariant operation in the subspace ${\rm span}\{\ket{e_1},\ket{e_2}\}$, followed by a phase-independent shearing. Therefore, defining the subspace $y$-rotation
\begin{equation}
    R(\phi) = e^{-\ii \phi Y_{12}} = \cos\phi\, I_{12} - \ii\sin\phi\, Y_{12},
\end{equation}
and the shearing operators
\begin{equation}
   B_1 =  \sqrt{p_0} \dyad{e_0} + \sqrt{p_3} \dyad{e_3} , B_2 = \sqrt{p_1} \dyad{e_1} + \sqrt{p_2} \dyad{e_2},
\end{equation}
we can write $\rho_{E|m}(\phi) = V(\phi) V(\phi)^\dagger$ with $V(\phi) = B_1 \oplus B_2 R(\phi)$. As shown in the Appendix, for small phase-shifts this system is analogous to a pure probe state under standard unitary evolution. Hence its QFI is $\phi$-invariant and equals
\begin{align}
Q_E(\rho_{E|m})&=
1-t_3^2,\label{eq:final_eve_qfi}
\end{align}
which is also valid for $Y$ measurements.

Surprisingly, we uncover that Eve's QFI is uniquely determined by the $ZZ$ correlator, which can be estimated during the \che rounds and used to certify security. This can be understood from two different perspectives about the twirled state $\sigma_{AB}$ \eqref{eq:twirled_State_AB}: 
in the Pauli basis, a large weight of the phase-insensitive ZZ component implies little information about the parameter; in the Bell basis, maximizing the presence of Bell states related via a phase-sensitive $X$ operation favours Eve's estimation. 

Yet, in the experiment, we are faced with the additional difficulty that the estimator \eqref{eq:estimator} depends explicitly on the channel parameters $t_1$, $t_2$; this is a special case of multiparameter estimation in which the parameters are retrieved sequentially rather than jointly~\cite{Mukhopadhyay2025}. The operation interwines the \che and \est steps, that both contribute to the uncertainty on the retrieved phase values. Therefore, the estimation variance is obtained by error propagation as
\begin{equation}
\begin{aligned}
    \Delta^2\phi &= \left(\partial_{t^{\rm che}_1}\hat \phi\right)^2\Delta^2 t^{\rm che}_1+\left(\partial_{t^{\rm che}_2}\hat \phi\right)^2\Delta^2 t^{\rm che}_2+\left(\partial_{t^{\rm est}_1}\hat \phi\right)^2\Delta^2t^{\rm est}_1+\left(\partial_{t^{\rm est}_2}\hat \phi\right)^2\Delta^2t^{\rm est}_2\\
    &=\Delta^2\phi_{\rm est}+\Delta^2\phi_{\rm che},
    \end{aligned}
    \label{eq:variances}
\end{equation}
where we have distinguished the contributions from the two classes of rounds. Notice that, since Eve has complete control over the channel, she does not need to learn the $t_\beta$'s from the data.  Eq.~\eqref{eq:variances} provides the uncertainties reported as a function of $\theta$ in Fig.~\ref{fig:variances}, which have also been used for the error bars in Fig.~\ref{fig:phases}.  The squares correspond to the total variance $N_{\rm est} \Delta^2\phi$, rescaled to the number of resources, and follow a descending trend with $\theta$. The comparison with Eve's attainable precision $1/Q_E(\rho_{E|m})$ shows a region in which Alice's precision is superior to Eve's: its extent is determined by the confidence level Alice sets on Eve's information, with the solid black line indicating the estimated QFI, and the shadings accounting for increasingly stringent requirements. These experimental values follow well the prediction of a multiparameter treatment (dashed red line, see Appendix), while, isolating the contributions from  $N_{\rm est}\Delta^2\phi_{\rm est}$, these are close to the single-parameter CRB defined by means of Alice's CFI in \eqref{eq:FI-single}.

\begin{figure}
    \centering
    \includegraphics[width=\textwidth]{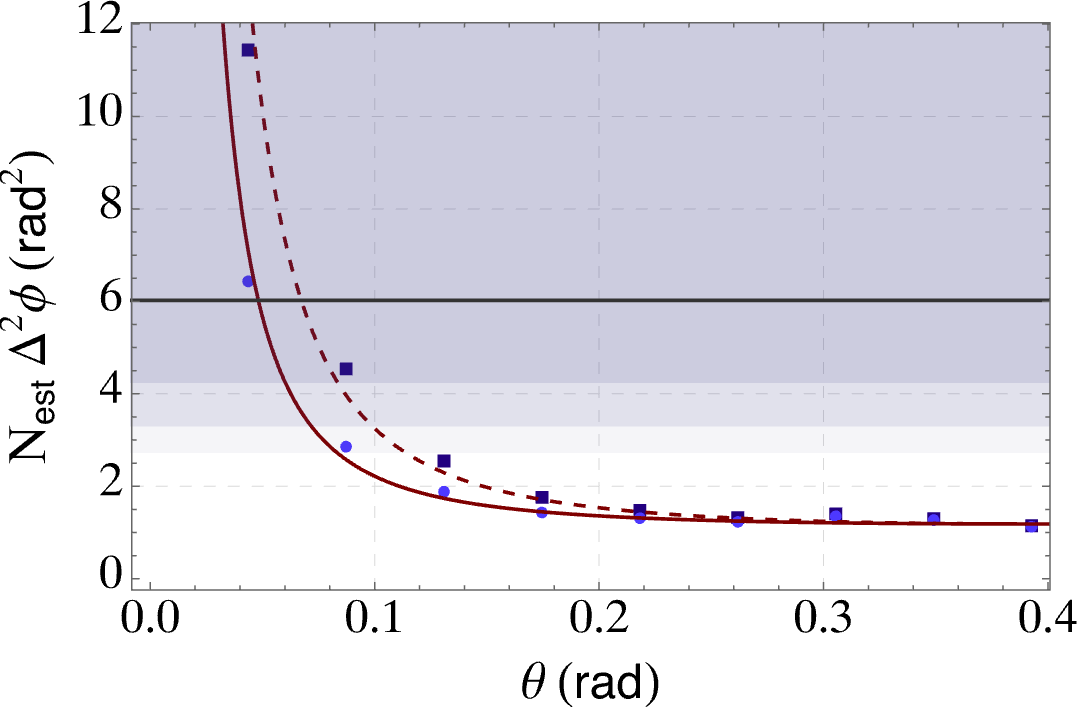}
    \caption{ Estimation performances in presence of a malicious adversary. Squares correspond to total estimation uncertainties, retrieved by including in $\Delta^2\phi$ uncertainties in the \che and \est rounds by error propagation. Circles represent estimation uncertainties $\Delta^2\phi_{\rm est}$ retrieved by considering the \est rounds only for error propagation.  On average, $ N_{\rm res}=193$ resources have been used in a single round. Accordingly, the red continuous line is associated with Alice's CRB at fixed \che round correlators, whereas the red dashed line depicts Alice's CRB  from a multiparameter treatment. The black, continuous line represents an estimate of Eve's QFI, according to \eqref{eq:final_eve_qfi}. The gray shadings coincide with progressively larger deviations from $\bar{t}_3$, by 1 to 3 standard deviations. }
    \label{fig:variances}
\end{figure}

\section{Discussion}
We have established and demonstrated a SPQR protocol that, for the first time to our knowledge, enables the simultaneous certification of precision--for the legitimate users--and security--against the eavesdropper--of the remotely estimated phase. In the spirit of network applications, where quantum technological capabilities may vary greatly among different parties, our protocol has minimal technical requirements. Strikingly, we showed that the use of cryptographic randomization methods such as Pauli twirling can constrain the eavesdropper's effective attacks without reducing the metrological performance of the effective probe state. Therefore, while twirling is used in cryptography in order to reduce the assumptions on the system, in this context it turns out to be useful in sensing for conceding a minimal amount of assumptions, allowing the two worlds to meet successfully halfway. 

While our protocol opens a concrete path forward to implement private remote sensing in real quantum networks, it remains an open problem to determine the ultimate joint precision-privacy limits allowed by quantum mechanics. 
At the same time, there remain technological challenges ahead in order to deploy large-scale secure sensing networks, whose practical impact is yet to be understood, but can now be taken up with cautious optimism.

\bmhead{Acknowledgements} We thank Ilaria Gianani, Damian Markham, Sean Moore and Jacob Dunnigham for valuable discussion. This work has been realised with the support of the PRIN project PRIN22-RISQUE-2022T25TR3 of the Italian Ministry of University and by the the NQSTI (Bando a Cascata CUP: B53C22004170006). M.R. acknowledges support from MUR (project FIS-2023-03472, “SuNRISE” CUP F53C24001760001). M.B. and G.B. acknowledge support by Rome Technopole Innovation Ecosystem (PNRR grant M4-C2-Inv) and MUR Dipartimento di Eccellenza 2023-2027. \bmhead{Data availability} The data supporting this work are available at the public repository 10.5281/zenodo.20597482 




\section{Appendix}

\subsection{Calculation of the QFI for Eve's sheared covariant family}
First, observe that, when Bob measures $Y$ with projectors $\Pi_\pm^Y=\tfrac12(I\pm Y)$ we have
$\Xi_\pm^Y(\phi)=\tfrac12\big(I\pm Y\cos \phi \mp X\sin \phi\big)$. Thus, the QFI for a $Y$ measurement is the same as for a $X$ measurement, and in the following we focus on the $X$-measurement.

Eve's matrix coefficients can be calculated using $\Xi_\pm(\phi)=\tfrac12(I\pm X\cos\phi \pm Y\sin\phi)$ and the Pauli matrices' commutation relations and trace properties, yielding
\begin{equation}\label{eq:basic-traces} 
    \tfrac12\operatorname{Tr}[\Xi_\pm(\phi)\,I]=\tfrac12,\quad \tfrac12\operatorname{Tr}[\Xi_\pm(\phi)\,X]=\pm\tfrac12\cos\phi,\quad \tfrac12\operatorname{Tr}[\Xi_\pm(\phi)\,Y]=\pm\tfrac12\sin\phi, 
\end{equation} while the traces with $Z$ vanish. Taking into account that $P_\beta P_{\beta'}$ is proportional to one of the Pauli matrices, all matrix elements of \eqref{eq:E-unnormalized-matrix} are proportional to one of the traces \eqref{eq:basic-traces}.


We now prove that Eve's conditional state after Bob's measurement and public announcement has QFI that is phase-invariant, despite it being a non-unitary encoding of the parameter itself. Our proof relies on decomposing the effective parameter-encoding process as a two-dimensional rotation followed by a non-unitary coordinate rescaling, or shearing, in a suitable two-dimensional subspace. It then turns out that the infinitesimal change in fidelity due to these processes (or equivalently the associated Bures metric) is analogous to that of a pure probe state under a suitable unitary encoding of the parameter, which is notoriously parameter-invariant.

Since our observations apply to a wider family of parameter encodings, that we call sheared covariant families, we state our main result in its most general form, and prove it at the end of this section:

\begin{theorem}[Invariant QFI of rank-two sheared covariant encodings]
\label{thm:rank_two_rotation_rescaling}
Let $\mathcal H=\bigoplus_{\ell=1}^N \mathcal H_\ell$ be an orthogonal direct sum of Hilbert spaces. For each \(\ell\), let $B_\ell:\mathbb C^2\to \mathcal H_\ell$
be a linear map such that the \(2\times2\) Gram matrix $G_\ell:=B_\ell^\dagger B_\ell$ is real symmetric. Let
\begin{equation}
R_\theta:=
\begin{pmatrix}
\cos\theta&-\sin\theta\\
\sin\theta&\cos\theta
\end{pmatrix}
\end{equation}
and let \(\omega_\ell\in\mathbb R\). Define the amplitudes matrix
\begin{equation}
V_\phi
:=
\bigoplus_{\ell=1}^N B_\ell R_{\omega_\ell\phi},
\end{equation}
and the normalized state
\begin{equation}
\rho_\phi
:=
\frac{V_\phi V_\phi^\dagger}{T},
\qquad
T:=\Tr[V_\phi^\dagger V_\phi].
\end{equation}
If \(V_\phi\) has column rank two for all \(\phi\) in the parameter
region considered, then the SLD quantum Fisher information of \(\rho_\phi\)
is independent of \(\phi\) and is given by
\begin{equation}\label{eq:qfi_general_sheared_covariant_main}
Q(\rho_\phi)
=
4\sum_{\ell,j<\ell}q_\ell q_j (\omega_\ell-\omega_j)^2
\end{equation}
where $q_\ell:=\frac{\Tr G_\ell}{T}$.

\end{theorem}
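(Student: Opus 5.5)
The plan is to compute the QFI through the purification formula. For a state of rank two this reduces to a minimization over a $2\times2$ Hermitian ``gauge'' generator, and the real symmetric structure of the Gram matrices together with the rotation covariance should remove all $\phi$-dependence.

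\textbf{Step 1: purification.} View $V_\phi$ as a map $\mathbb C^2\to\mathcal H$ obtained by stacking the blocks $B_\ell R_{\omega_\ell\phi}$. Set
\[
|\Psi_\phi\rangle=T^{-1/2}\sum_{i=1,2}V_\phi|i\rangle\otimes|i\rangle\in\mathcal H\otimes\mathbb C^2 .
\]
This is a purification of $\rho_\phi$ with a two-dimensional ancilla. The normalization is constant in $\phi$, because $T=\sum_\ell\Tr[R^T G_\ell R]=\sum_\ell\Tr G_\ell$.

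Since $V_\phi$ has column rank two, every purification on a two-dimensional ancilla has the form $V_\phi U_\phi$ with $U_\phi$ unitary. I would then invoke the standard fact (Uhlmann, or Fujiwara--Imai) that
\[
Q(\rho_\phi)=4\min_h\Big(\|\dot\Psi_h\|^2-|\langle\Psi|\dot\Psi_h\rangle|^2\Big),
\]
where $h$ ranges over Hermitian $2\times2$ matrices and $\dot\Psi_h$ corresponds to the derivative $T^{-1/2}(\dot V_\phi-\ii V_\phi h)$. In matrix form this is
\[
Q=\frac{4}{T}\min_h\Big(\Tr[D_h^\dagger D_h]-\tfrac1T|\Tr[V^\dagger D_h]|^2\Big),\qquad D_h=\dot V-\ii Vh .
\]

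\textbf{Step 2: explicit derivative.} Write $J=\begin{pmatrix}0&-1\\1&0\end{pmatrix}$. Then $\dot R_{\omega\phi}=\omega R_{\omega\phi}J$, and $J$ commutes with every $R_\theta$. Setting $G'_\ell=R^T_{\omega_\ell\phi}G_\ell R_{\omega_\ell\phi}$, which is real symmetric with $\Tr G'_\ell=\Tr G_\ell$, and $h'_\ell=R^Th R$, the objective becomes
\[
\sum_\ell\Tr\!\big[(\omega_\ell J-\ii h'_\ell)^\dagger G'_\ell(\omega_\ell J-\ii h'_\ell)\big]-\tfrac1T\Big|\sum_\ell\Tr[G'_\ell(\omega_\ell J-\ii h'_\ell)]\Big|^2 .
\]

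\textbf{Step 3: decoupling of the gauge (main obstacle).} Decompose $h=h_s+c\,\ii J$, where $h_s$ is real symmetric (spanned by $I,\sigma_x,\sigma_z$) and $\ii J$ is the imaginary Hermitian direction. The rotation preserves this split, and $\ii J$ is invariant under it. I expect the following facts to hold.
\begin{itemize}
\item For real $A$ and real symmetric $G$ and $h_s$, one has $\Tr[A^TGh_s]=\Tr[h_sGA]$. Hence the cross terms $\ii\Tr[h_sGA]-\ii\Tr[A^TGh_s]$ cancel.
\item The term $\Tr[GJ]$ vanishes for symmetric $G$, so the $J$-parts do not contribute to the overlap.
\item The $h_s$ part therefore contributes $\sum_\ell\Tr[h'_sG'_\ell h'_s]-\tfrac1T|\sum_\ell\Tr[G'_\ell h'_s]|^2$. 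This is nonnegative by Cauchy--Schwarz in the inner product weighted by $G=\oplus G'_\ell$, so the minimum is attained at $h_s=0$ (up to a multiple of $I$, which is irrelevant).
\end{itemize}
Making this separation airtight, including the mixed terms between $h_s$ and the $J$-directions, is where I expect the main care to be needed.

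\textbf{Step 4: reduction to a variance.} With $h=c\,\ii J$ we get $\omega_\ell J-\ii h=(\omega_\ell+c)J$; I rename $-c\to c$. The overlap vanishes, and $\Tr[J^TG'_\ell J]=\Tr G'_\ell=\Tr G_\ell$ because $J^TGJ$ is the cofactor matrix of a symmetric $2\times2$ matrix. Hence
\[
Q=4\min_c\sum_\ell q_\ell(\omega_\ell-c)^2 .
\]
Since $\sum_\ell q_\ell=1$, the minimum is the weighted variance of the $\omega_\ell$, which equals $\sum_{j<\ell}q_\ell q_j(\omega_\ell-\omega_j)^2$. This gives \eqref{eq:qfi_general_sheared_covariant_main}. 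Every $\phi$-dependence sat in the $G'_\ell$ and dropped out through their traces, so $Q$ is manifestly independent of $\phi$.
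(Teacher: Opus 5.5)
Your proposal is correct, and it takes a genuinely different route from the paper's. The paper works with the fidelity at finite separation. Via Uhlmann, $F(\rho_\phi,\rho_{\phi+\epsilon})=T^{-2}\|V_\phi^\dagger V_{\phi+\epsilon}\|_1^2$. It then splits the real $2\times 2$ matrix $M_\phi(\epsilon)=\sum_\ell R^T_{\omega_\ell\phi}G_\ell R_{\omega_\ell(\phi+\epsilon)}$ into a rotation--dilation part $xI+yJ$ plus a real symmetric traceless shear. Using $\|M\|_1^2=\Tr[M^TM]+2\det M$, with $\det M>0$ near $\epsilon=0$ by the rank-two hypothesis, the shear cancels and $F=|\sum_\ell q_\ell e^{\ii\omega_\ell\epsilon}|^2$; $Q$ is read off the $\epsilon^2$ coefficient.

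You instead work infinitesimally with the variational purification formula. The real symmetric structure of the $G_\ell$ enters through $\Tr[GJ]=0$ and the cancellation of the $h_s$--$J$ cross terms, rather than through the determinant identity. The facts you flagged in Step~3 do hold, and there is no mixing between $h_s$ and $c$. Writing $h=h_s+c\,\ii J$ and $a_\ell=\omega_\ell+c$, the objective is exactly $\sum_\ell a_\ell^2\Tr G_\ell+\big(\Tr[h_s\mathcal G h_s]-T^{-1}\Tr[\mathcal G h_s]^2\big)$, with $\mathcal G=V_\phi^\dagger V_\phi=\sum_\ell G'_\ell$ and $\Tr\mathcal G=T$.

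There is one bookkeeping slip in Step~2. The $\ell$-th block of $D_h$ is $B_\ell R_{\omega_\ell\phi}(\omega_\ell J-\ii h)$, with the same $h$ in every block, so you should rotate either $G_\ell$ or $h$, not both. The correct objective has $G'_\ell$ and the unrotated $h$, which is what lets the Cauchy--Schwarz step use the single weight $\mathcal G$; the slip does not affect the conclusion.

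You should also state explicitly that restricting to a two-dimensional ancilla loses nothing, because the rank is constant. Your own optimum $h=-\bar\omega\,\ii J$, with $\bar\omega=\sum_\ell q_\ell\omega_\ell$, confirms this. There $V_\phi^\dagger D_h=\sum_\ell(\omega_\ell-\bar\omega)G'_\ell J$ is real symmetric, because $\sum_\ell(\omega_\ell-\bar\omega)G'_\ell$ is traceless, so it is the horizontal lift.

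As for what each route buys: the paper's gives more than the QFI, namely the whole local fidelity. That yields the exact pure-state analogy with $\ket{\chi_\phi}=\sum_\ell\sqrt{q_\ell}e^{\ii\omega_\ell\phi}\ket{\ell}$. Yours avoids the trace-norm identity, the continuity argument on $\det M$, and the fidelity-to-QFI expansion. It also exhibits the optimal purification (and hence, implicitly, the SLD) and makes the weighted-variance form $Q=4\sum_\ell q_\ell(\omega_\ell-\bar\omega)^2$ and its $\phi$-independence immediate.
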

In our specific case, we have that the Theorem applies with $N=2$, $\omega_1=0$, $\omega_2=1$ and the choice of linear operators
\begin{equation}
B_{1}
=
\begin{pmatrix}
\sqrt{p_0}&0\\
0&-\ii\sqrt{p_3}
\end{pmatrix}, \quad  B_{2}
= m
\begin{pmatrix}
\sqrt{p_1}&0\\
0&\sqrt{p_2}
\end{pmatrix},
\end{equation}
acting respectively in $\mathcal H_1 = \mathrm{span}\{\ket{e_0},\ket{e_3}\}$ and $\mathcal H_2 =\mathrm{span}\{\ket{e_1},\ket{e_2}\}$. It follows $q_1 = p_0+p_1$, $q_2=p_1+p_2$ and $T=1$ and therefore the QFI expression \eqref{eq:qfi_general_sheared_covariant} yields
\begin{equation}\label{eq:qfi_bell_basis}
    Q(\rho_\phi) = 4 q_1 q_2 = 4(p_0+p_1)(p_1+p_2).
\end{equation}
Finally, the relation between the Bell-basis and Pauli-basis coefficients can be determined by computing the matched Pauli correlators in the Bell basis
\begin{align}
    t_i=4\ave{P_i \otimes P_i}_\sigma = \sum_{\beta} p_\beta \bra{\beta}P_i \otimes P_i\ket{\beta} = \sum_{\beta} p_\beta \bra{0}P_i \otimes P_\beta P_i P_\beta\ket{0}.
    \end{align}
Recalling that $P_\beta P_i P_\beta = -P_i$ for $i>0$ and $\beta\neq i$, due to anti-commutativity of the Pauli matrices, and calculating explicitly the correlators on the first Bell state, i.e, $\bra0 P_i P_i\ket0 = 1$ for $i\neq 2$ while  $\bra0 P_2 P_2\ket0 = -1$, since $Y=\ii XZ$, we obtain the relations
\begin{align}
    t_0 &= 4(p_0 + p_1 + p_2 + p_3),\quad t_1 = 4(p_0 + p_1 - p_2 - p_3),\\
    t_2 &= 4(-p_0 + p_1 - p_2 + p_3),\quad   t_3 = 4(p_0 - p_1 - p_2 + p_3).
\end{align}
Their inversion yields
\begin{align}
    p_0 &= \frac14(1 + t_1 - t_2 + t_3),\quad p_1 = \frac14(1 + t_1 + t_2 - t_3),\\
    p_2 &= \frac14(1 - t_1 - t_2 - t_3),\quad   p_3 = \frac14(1 - t_1 + t_2 + t_3),
\end{align}
ans substituting into \eqref{eq:qfi_bell_basis}.

\begin{proof}(Theorem 1)
First observe that
\begin{equation}
T=\Tr[V_\phi^\dagger V_\phi]
=
\sum_{\ell=1}^N \Tr[R_{\omega_\ell\phi}^T
G_\ell R_{\omega_\ell\phi}]
=
\sum_{\ell=1}^N \Tr G_\ell
=
\sum_{\ell=1}^N t_\ell,
\end{equation}
so the normalization is independent of \(\phi\).

We compute the fidelity between two nearby states. By Uhlmann's theorem,
for two states written as \(VV^\dagger\) and \(WW^\dagger\),
\begin{equation}
\mathcal F(VV^\dagger,WW^\dagger)
=
\|V^\dagger W\|_1^2,
\end{equation}
where \(F\) denotes the squared Uhlmann fidelity. Hence
\begin{equation}
F(\rho_\phi,\rho_{\phi+\epsilon})
=
\frac{1}{T^2}
\left\|V_\phi^\dagger V_{\phi+\epsilon}\right\|_1^2 .
\end{equation}
Setting $M_\phi(\epsilon):=V_\phi^\dagger V_{\phi+\epsilon}$ and using the orthogonal block structure of \(V_\phi\), we obtain
\begin{equation}
M_\phi(\epsilon)
=
\sum_{\ell=1}^N
R_{\omega_\ell\phi}^T
G_\ell
R_{\omega_\ell(\phi+\epsilon)} .
\end{equation}

Now decompose each real symmetric Gram matrix as $
G_\ell=\frac{t_\ell}{2}I+S_\ell$,
where \(S_\ell\) is real symmetric and traceless. Then
\begin{equation}
M_\phi(\epsilon)
=
\frac12\sum_{\ell=1}^N t_\ell R_{\omega_\ell\epsilon}
+
\sum_{\ell=1}^N
R_{\omega_\ell\phi}^T
S_\ell
R_{\omega_\ell(\phi+\epsilon)} .
\end{equation}
The first term is the rotation-dilation part of the infinitesimal change. The second term
is a symmetric traceless shear part. Indeed, if \(S\) is real symmetric and
traceless, then $R_\alpha^T S R_\beta$ is again real symmetric and traceless, since one has
\begin{equation}
R_\alpha^T S R_\beta
=
(R_\alpha^T S R_\alpha)R_{\beta-\alpha},
\end{equation}
and multiplying a real symmetric traceless \(2\times2\) matrix by a real
rotation again gives a real symmetric traceless \(2\times2\) matrix.

Therefore \(M_\phi(\epsilon)\) has the effective qubit form
\begin{equation}
M_\phi(\epsilon)=x(\epsilon)I+y(\epsilon)J+S_{\phi,\epsilon},
\end{equation}
where $J=XZ=-iY$, \(S_{\phi,\epsilon}\) is real symmetric and traceless, and
\begin{equation}
x(\epsilon)
=
\frac12\sum_{\ell=1}^N t_\ell\cos(\omega_\ell\epsilon),
\qquad
y(\epsilon)
=
\frac12\sum_{\ell=1}^N t_\ell\sin(\omega_\ell\epsilon).
\end{equation}

Writing
\begin{equation}
S_{\phi,\epsilon}
=
\begin{pmatrix}
u&v\\
v&-u
\end{pmatrix},
\end{equation}
we have that
\begin{equation}
M_\phi(\epsilon)
=
\begin{pmatrix}
x+u&v-y\\
v+y&x-u
\end{pmatrix}.
\end{equation}
For a real \(2\times2\) matrix of this form,
\begin{equation}
\Tr[M^T M]
=
2(x^2+y^2+u^2+v^2),
\end{equation}
and
\begin{equation}
\det M
=
x^2+y^2-u^2-v^2.
\end{equation}
Moreover, for any real \(2\times2\) matrix, writing the trace-norm as the sum of its singular values, it can be easily shown that
\begin{equation}
\|M\|_1^2
=
\Tr[M^T M]+2|\det M|,
\end{equation}
where the first term on the right-hand side equals the sum of squares of the singular values, and the second one their double-product.
Since \(V_\phi\) has column rank two,
\begin{equation}
M_\phi(0)=V_\phi^\dagger V_\phi
\end{equation}
is positive definite. Thus \(\det M_\phi(0)>0\), and by continuity
\(\det M_\phi(\epsilon)>0\) for all sufficiently small \(\epsilon\), in which case it holds
\begin{equation}
\|M_\phi(\epsilon)\|_1^2
=
\Tr[M_\phi(\epsilon)^T M_\phi(\epsilon)]
+
2\det M_\phi(\epsilon).
\end{equation}
Substituting the expressions above, the shear variables \(u,v\) cancel:
\begin{equation}
\|M_\phi(\epsilon)\|_1^2
=
4(x(\epsilon)^2+y(\epsilon)^2).
\end{equation}
Therefore
\begin{equation}\label{eq:fidelity_infinitesimal}
 F(\rho_\phi,\rho_{\phi+\epsilon})
=
\left|
\sum_{\ell=1}^N q_\ell e^{i\omega_\ell\epsilon}
\right|^2 .
\end{equation}

Interestingly, we observe that this fidelity is the same as the one obtained using a pure probe state and a suitable unitary parameter encoding:
\begin{equation}
\ket{\chi_\phi}
=
\sum_{\ell=1}^N \sqrt{q_\ell}\,
e^{i\omega_\ell\phi}\ket{\ell}.
\end{equation}
Indeed, $|\langle\chi_\phi|\chi_{\phi+\epsilon}\rangle|^2 \equiv F(\rho_\phi,\rho_{\phi+\epsilon}) $ of \eqref{eq:fidelity_infinitesimal}, which implies that the QFI is parameter-invariant.

Calculating the square and expanding for small $\epsilon$ we get
\begin{align}
    F(\rho_\phi,\rho_{\phi+\epsilon}) &= \left(\sum_{\ell} q_\ell \cos(\omega_\ell \epsilon)\right)^2 + \left(\sum_{\ell} q_\ell \sin(\omega_\ell \epsilon)\right)^2 \\
    &= \sum_{\ell} q_\ell^2 + \sum_{\ell,j\neq \ell} q_\ell q_j\left(\cos(\omega_\ell \epsilon) \cos(\omega_j \epsilon)+\sin(\omega_\ell \epsilon) \sin(\omega_j \epsilon)\right)\\
    &= 1 - 2 \sum_{\ell,j<\ell} q_\ell q_j \left(1-\cos((\omega_\ell-\omega_j)\epsilon)\right)\\
    &= 1- \epsilon^2\sum_{\ell,j<\ell} q_\ell q_j (\omega_\ell-\omega_j)^2 + O(\epsilon^4).
\end{align}
Using the standard relation between squared Uhlmann fidelity and SLD QFI (having also noticed that the state has constant rank, so the non-zero eigenvalues are $\phi$-independent),
\begin{equation}
 F(\rho_\phi,\rho_{\phi+\epsilon})
=
1-\frac{Q(\rho_\phi)}{4}\epsilon^2+o(\epsilon^2),
\end{equation}
we obtain the final QFI expression \eqref{eq:qfi_general_sheared_covariant_main}, which is $\phi$-independent as expected from the pure-state analogy.
\end{proof}

\subsection{Implementing bilateral Pauli twirling via Pauli-frame tracking}\label{app:pauli_frame_tracking}
Let $\cP(\cdot)=\frac14\sum_{\beta=1}^4 (P_\beta\otimes P_\beta)\cdot(P_\beta\otimes P_\beta)$ denote the bilateral Pauli twirling channel. For any POVM element $M$ and state $\rho$, it holds
\begin{equation}
\mathrm{Tr}\big[(P\!\otimes\!P)\rho(P\!\otimes\!P)\,M\big]
= \mathrm{Tr}\big[\rho\,(P\!\otimes\!P)\,M\,(P\!\otimes\!P)\big].
\end{equation}
That is, the outcome probabilities obtained by twirling the state equal those obtained by twirling the measurement. This permits a virtual implementation of the twirl: we keep the hardware unchanged and instead classically relabel the observed outcomes according to how Pauli conjugation transforms the measured observables. Thus, twirling can be carried out locally and offline by Alice without any public communication once she receives Bob's measurement bases and observables. 

Specifically, the non-trivial conjugation rules are given by $P_{\beta'} P_\beta P_{\beta'} = - P_\beta $ for $\beta\neq\beta'>1$.
Hence for matched two-qubit measurements $P_\beta P_\beta$ we have for all $\beta$ that
\begin{equation}
 (P\!\otimes\!P)^{\!\dagger}(P_\beta\!\otimes\!P_\beta)(P\!\otimes\!P) = P_\beta\otimes P_\beta.
\end{equation}
In other words, the only effect of $P$ corresponds operationally to flipping each local outcome bit, which does not change the parity of matched measurements. Therefore, the analysis in the main text follows with correlators as chosen. Non-matched correlators, if needed, could be updated via suitable sign-flips.

\subsection{Multiparameter treatment}

The measurements considered in our treatment are characterised by a distribution of outcomes $p_\beta(\pm1|\phi) = \tfrac{1\pm t_\beta \cos(\phi)}{2}$. Standard treatment of the Fisher information for the joint estimation of $\phi$ and $t_\beta$ involves the matrix
\begin{equation}
    F_{\rm MP}=
    \begin{pmatrix}
        \frac{t_\beta^2\sin^2\phi}{1-t_\beta^2\cos^2\phi}&-\frac{t_\beta^2\sin^2 2\phi}{2(1-t_\beta^2\cos^2\phi)}\\
        -\frac{t_\beta^2\sin^2 2\phi}{2(1-t_\beta^2\cos^2\phi)}&\frac{\cos^2\phi}{1-t_\beta^2\cos^2\phi}
    \end{pmatrix}.
\end{equation}
This quantifies the information made available by each run of the experiment. The total information is thus the sum of $N_{\rm est}$ such identical terms from the \est rounds, and the initial information on $t_\beta$ available from the \che rounds:
\begin{equation}
    F_{\rm tot}=N_{\rm est}F_{\rm MP}+
    \begin{pmatrix}
        0&0\\
        0&\frac{1}{\Delta^2t_\beta}
    \end{pmatrix}.
\end{equation}
This also solves the issue of the singularity of $F_{\rm MP}$~\cite{Mukhopadhyay2025}. The expected uncertainty on $\phi$ is thus bounded as $\Delta^2\phi\geq \left(F_{\rm tot}\right)^{-1}_{1,1}$ . The limit reported as the dashed line in Fig.~\ref{fig:variances} is obtained from the average information made available in the two settings - this is permitted since the channel parameters $t_1$ and $t_2$ are statistically independent.

\bibliography{library.bib}


\begin{thebibliography}{50}
\ifx \bisbn   \undefined \def \bisbn  #1{ISBN #1}\fi
\ifx \binits  \undefined \def \binits#1{#1}\fi
\ifx \bauthor  \undefined \def \bauthor#1{#1}\fi
\ifx \batitle  \undefined \def \batitle#1{#1}\fi
\ifx \bjtitle  \undefined \def \bjtitle#1{#1}\fi
\ifx \bvolume  \undefined \def \bvolume#1{\textbf{#1}}\fi
\ifx \byear  \undefined \def \byear#1{#1}\fi
\ifx \bissue  \undefined \def \bissue#1{#1}\fi
\ifx \bfpage  \undefined \def \bfpage#1{#1}\fi
\ifx \blpage  \undefined \def \blpage #1{#1}\fi
\ifx \burl  \undefined \def \burl#1{\textsf{#1}}\fi
\ifx \doiurl  \undefined \def \doiurl#1{\url{https://doi.org/#1}}\fi
\ifx \betal  \undefined \def \betal{\textit{et al.}}\fi
\ifx \binstitute  \undefined \def \binstitute#1{#1}\fi
\ifx \binstitutionaled  \undefined \def \binstitutionaled#1{#1}\fi
\ifx \bctitle  \undefined \def \bctitle#1{#1}\fi
\ifx \beditor  \undefined \def \beditor#1{#1}\fi
\ifx \bpublisher  \undefined \def \bpublisher#1{#1}\fi
\ifx \bbtitle  \undefined \def \bbtitle#1{#1}\fi
\ifx \bedition  \undefined \def \bedition#1{#1}\fi
\ifx \bseriesno  \undefined \def \bseriesno#1{#1}\fi
\ifx \blocation  \undefined \def \blocation#1{#1}\fi
\ifx \bsertitle  \undefined \def \bsertitle#1{#1}\fi
\ifx \bsnm \undefined \def \bsnm#1{#1}\fi
\ifx \bsuffix \undefined \def \bsuffix#1{#1}\fi
\ifx \bparticle \undefined \def \bparticle#1{#1}\fi
\ifx \barticle \undefined \def \barticle#1{#1}\fi
\bibcommenthead
\ifx \bconfdate \undefined \def \bconfdate #1{#1}\fi
\ifx \botherref \undefined \def \botherref #1{#1}\fi
\ifx \url \undefined \def \url#1{\textsf{#1}}\fi
\ifx \bchapter \undefined \def \bchapter#1{#1}\fi
\ifx \bbook \undefined \def \bbook#1{#1}\fi
\ifx \bcomment \undefined \def \bcomment#1{#1}\fi
\ifx \oauthor \undefined \def \oauthor#1{#1}\fi
\ifx \citeauthoryear \undefined \def \citeauthoryear#1{#1}\fi
\ifx \endbibitem  \undefined \def \endbibitem {}\fi
\ifx \bconflocation  \undefined \def \bconflocation#1{#1}\fi
\ifx \arxivurl  \undefined \def \arxivurl#1{\textsf{#1}}\fi
\csname PreBibitemsHook\endcsname

\bibitem[\protect\citeauthoryear{Kimble}{2008}]{Kimble2008}
\begin{barticle}
\bauthor{\bsnm{Kimble}, \binits{H.J.}}:
\batitle{{The quantum internet}}.
\bjtitle{Nature}
\bvolume{453}(\bissue{7198}),
\bfpage{1023}--\blpage{1030}
(\byear{2008})
\doiurl{10.1038/nature07127}
\end{barticle}
\endbibitem

\bibitem[\protect\citeauthoryear{Zhang et~al.}{2026}]{Zhang2026}
\begin{barticle}
\bauthor{\bsnm{Zhang}, \binits{J.}},
\bauthor{\bsnm{Wang}, \binits{L.}},
\bauthor{\bsnm{Hai}, \binits{Y.-J.}},
\bauthor{\bsnm{Zhang}, \binits{J.}},
\bauthor{\bsnm{Chu}, \binits{J.}},
\bauthor{\bsnm{Jiang}, \binits{J.}},
\bauthor{\bsnm{Huang}, \binits{W.}},
\bauthor{\bsnm{Liang}, \binits{Y.}},
\bauthor{\bsnm{Qiu}, \binits{J.}},
\bauthor{\bsnm{Sun}, \binits{X.}},
\bauthor{\bsnm{Tao}, \binits{Z.}},
\bauthor{\bsnm{Zhang}, \binits{L.}},
\bauthor{\bsnm{Zhou}, \binits{Y.}},
\bauthor{\bsnm{Chen}, \binits{Y.}},
\bauthor{\bsnm{Guo}, \binits{W.}},
\bauthor{\bsnm{Linpeng}, \binits{X.}},
\bauthor{\bsnm{Liu}, \binits{S.}},
\bauthor{\bsnm{Ren}, \binits{W.}},
\bauthor{\bsnm{Zhong}, \binits{Y.}},
\bauthor{\bsnm{Niu}, \binits{J.}},
\bauthor{\bsnm{Yuan}, \binits{H.}},
\bauthor{\bsnm{Yu}, \binits{D.}}:
\batitle{{Distributed multi-parameter quantum metrology with a superconducting
  quantum network}}.
\bjtitle{Nat. Commun.}
\bvolume{17}(\bissue{1}),
\bfpage{1825}
(\byear{2026})
\doiurl{10.1038/s41467-026-68535-9}
\end{barticle}
\endbibitem

\bibitem[\protect\citeauthoryear{Ge and Jacobs}{2025}]{Ge2025}
\begin{barticle}
\bauthor{\bsnm{Ge}, \binits{W.}},
\bauthor{\bsnm{Jacobs}, \binits{K.}}:
\batitle{{Heisenberg-Limited Continuous-Variable Distributed Quantum Metrology
  with Arbitrary Weights}}.
\bjtitle{Phys. Rev. Lett.}
\bvolume{135}(\bissue{10}),
\bfpage{100801}
(\byear{2025})
\doiurl{10.1103/jkjj-3gvb}
\end{barticle}
\endbibitem

\bibitem[\protect\citeauthoryear{Kim et~al.}{2024}]{Kim2024}
\begin{barticle}
\bauthor{\bsnm{Kim}, \binits{D.-H.}},
\bauthor{\bsnm{Hong}, \binits{S.}},
\bauthor{\bsnm{Kim}, \binits{Y.-S.}},
\bauthor{\bsnm{Kim}, \binits{Y.}},
\bauthor{\bsnm{Lee}, \binits{S.-W.}},
\bauthor{\bsnm{Pooser}, \binits{R.C.}},
\bauthor{\bsnm{Oh}, \binits{K.}},
\bauthor{\bsnm{Lee}, \binits{S.-Y.}},
\bauthor{\bsnm{Lee}, \binits{C.}},
\bauthor{\bsnm{Lim}, \binits{H.-T.}}:
\batitle{{Distributed quantum sensing of multiple phases with fewer photons}}.
\bjtitle{Nat. Commun.}
\bvolume{15}(\bissue{1}),
\bfpage{266}
(\byear{2024})
\doiurl{10.1038/s41467-023-44204-z}
\end{barticle}
\endbibitem

\bibitem[\protect\citeauthoryear{Liu et~al.}{2024a}]{Liu2024d}
\begin{barticle}
\bauthor{\bsnm{Liu}, \binits{S.}},
\bauthor{\bsnm{Tian}, \binits{Y.}},
\bauthor{\bsnm{Zhang}, \binits{Y.}},
\bauthor{\bsnm{Lu}, \binits{Z.}},
\bauthor{\bsnm{Wang}, \binits{X.}},
\bauthor{\bsnm{Li}, \binits{Y.}}:
\batitle{{Integrated quantum communication network and vibration sensing in
  optical fibers}}.
\bjtitle{Optica}
\bvolume{11}(\bissue{12}),
\bfpage{1762}
(\byear{2024})
\doiurl{10.1364/OPTICA.537655}
\end{barticle}
\endbibitem

\bibitem[\protect\citeauthoryear{Liu et~al.}{2024b}]{Liu2024a}
\begin{barticle}
\bauthor{\bsnm{Liu}, \binits{Y.-C.}},
\bauthor{\bsnm{Cheng}, \binits{Y.-B.}},
\bauthor{\bsnm{Pan}, \binits{X.-B.}},
\bauthor{\bsnm{Sun}, \binits{Z.-Z.}},
\bauthor{\bsnm{Pan}, \binits{D.}},
\bauthor{\bsnm{Long}, \binits{G.-L.}}:
\batitle{{Quantum integrated sensing and communication via entanglement}}.
\bjtitle{Phys. Rev. Appl.}
\bvolume{22}(\bissue{3}),
\bfpage{034051}
(\byear{2024})
\doiurl{10.1103/PhysRevApplied.22.034051}
\end{barticle}
\endbibitem

\bibitem[\protect\citeauthoryear{Dalvit et~al.}{2024}]{Dalvit2024}
\begin{barticle}
\bauthor{\bsnm{Dalvit}, \binits{D.A.R.}},
\bauthor{\bsnm{Volkoff}, \binits{T.J.}},
\bauthor{\bsnm{Choi}, \binits{Y.-S.}},
\bauthor{\bsnm{Azad}, \binits{A.K.}},
\bauthor{\bsnm{Chen}, \binits{H.-T.}},
\bauthor{\bsnm{Milonni}, \binits{P.W.}}:
\batitle{{Quantum Frequency Combs with Path Identity for Quantum Remote
  Sensing}}.
\bjtitle{Phys. Rev. X}
\bvolume{14}(\bissue{4}),
\bfpage{041058}
(\byear{2024})
\doiurl{10.1103/PhysRevX.14.041058}
\end{barticle}
\endbibitem

\bibitem[\protect\citeauthoryear{Malia et~al.}{2022}]{Malia2022a}
\begin{barticle}
\bauthor{\bsnm{Malia}, \binits{B.K.}},
\bauthor{\bsnm{Wu}, \binits{Y.}},
\bauthor{\bsnm{Mart{\'{i}}nez-Rinc{\'{o}}n}, \binits{J.}},
\bauthor{\bsnm{Kasevich}, \binits{M.A.}}:
\batitle{{Distributed quantum sensing with mode-entangled spin-squeezed atomic
  states}}.
\bjtitle{Nature}
\bvolume{612}(\bissue{7941}),
\bfpage{661}--\blpage{665}
(\byear{2022})
\doiurl{10.1038/s41586-022-05363-z}
\end{barticle}
\endbibitem

\bibitem[\protect\citeauthoryear{Zhao et~al.}{2021}]{Zhao2021d}
\begin{barticle}
\bauthor{\bsnm{Zhao}, \binits{S.-R.}},
\bauthor{\bsnm{Zhang}, \binits{Y.-Z.}},
\bauthor{\bsnm{Liu}, \binits{W.-Z.}},
\bauthor{\bsnm{Guan}, \binits{J.-Y.}},
\bauthor{\bsnm{Zhang}, \binits{W.}},
\bauthor{\bsnm{Li}, \binits{C.-L.}},
\bauthor{\bsnm{Bai}, \binits{B.}},
\bauthor{\bsnm{Li}, \binits{M.-H.}},
\bauthor{\bsnm{Liu}, \binits{Y.}},
\bauthor{\bsnm{You}, \binits{L.}},
\bauthor{\bsnm{Zhang}, \binits{J.}},
\bauthor{\bsnm{Fan}, \binits{J.}},
\bauthor{\bsnm{Xu}, \binits{F.}},
\bauthor{\bsnm{Zhang}, \binits{Q.}},
\bauthor{\bsnm{Pan}, \binits{J.-W.}}:
\batitle{{Field Demonstration of Distributed Quantum Sensing without
  Post-Selection}}.
\bjtitle{Phys. Rev. X}
\bvolume{11}(\bissue{3}),
\bfpage{031009}
(\byear{2021})
\doiurl{10.1103/PhysRevX.11.031009}
\end{barticle}
\endbibitem

\bibitem[\protect\citeauthoryear{Guo et~al.}{2020}]{Guo2020}
\begin{barticle}
\bauthor{\bsnm{Guo}, \binits{X.}},
\bauthor{\bsnm{Breum}, \binits{C.R.}},
\bauthor{\bsnm{Borregaard}, \binits{J.}},
\bauthor{\bsnm{Izumi}, \binits{S.}},
\bauthor{\bsnm{Larsen}, \binits{M.V.}},
\bauthor{\bsnm{Gehring}, \binits{T.}},
\bauthor{\bsnm{Christandl}, \binits{M.}},
\bauthor{\bsnm{Neergaard-Nielsen}, \binits{J.S.}},
\bauthor{\bsnm{Andersen}, \binits{U.L.}}:
\batitle{{Distributed quantum sensing in a continuous-variable entangled
  network}}.
\bjtitle{Nat. Phys.}
\bvolume{16}(\bissue{3}),
\bfpage{281}--\blpage{284}
(\byear{2020})
\doiurl{10.1038/s41567-019-0743-x}
\end{barticle}
\endbibitem

\bibitem[\protect\citeauthoryear{Gessner et~al.}{2020}]{Gessner2020}
\begin{barticle}
\bauthor{\bsnm{Gessner}, \binits{M.}},
\bauthor{\bsnm{Smerzi}, \binits{A.}},
\bauthor{\bsnm{Pezz{\`{e}}}, \binits{L.}}:
\batitle{{Multiparameter squeezing for optimal quantum enhancements in sensor
  networks}}.
\bjtitle{Nat. Commun.}
\bvolume{11}(\bissue{1}),
\bfpage{3817}
(\byear{2020})
\doiurl{10.1038/s41467-020-17471-3}
\end{barticle}
\endbibitem

\bibitem[\protect\citeauthoryear{Proctor et~al.}{2018}]{Proctor2018}
\begin{barticle}
\bauthor{\bsnm{Proctor}, \binits{T.J.}},
\bauthor{\bsnm{Knott}, \binits{P.A.}},
\bauthor{\bsnm{Dunningham}, \binits{J.A.}}:
\batitle{{Multiparameter Estimation in Networked Quantum Sensors}}.
\bjtitle{Phys. Rev. Lett.}
\bvolume{120}(\bissue{8}),
\bfpage{080501}
(\byear{2018})
\doiurl{10.1103/PhysRevLett.120.080501}
\end{barticle}
\endbibitem

\bibitem[\protect\citeauthoryear{Hassani et~al.}{2025}]{Hassani2024}
\begin{barticle}
\bauthor{\bsnm{Hassani}, \binits{M.}},
\bauthor{\bsnm{Scheiner}, \binits{S.}},
\bauthor{\bsnm{Paris}, \binits{M.G.A.}},
\bauthor{\bsnm{Markham}, \binits{D.}}:
\batitle{{Privacy in Networks of Quantum Sensors}}.
\bjtitle{Phys. Rev. Lett.}
\bvolume{134}(\bissue{3}),
\bfpage{030802}
(\byear{2025})
\doiurl{10.1103/PhysRevLett.134.030802}
{\href{https://arxiv.org/abs/2408.01711}{{arXiv:2408.01711}}}
\end{barticle}
\endbibitem

\bibitem[\protect\citeauthoryear{Alushi and {Di Candia}}{2026}]{Alushi2025}
\begin{barticle}
\bauthor{\bsnm{Alushi}, \binits{U.}},
\bauthor{\bsnm{{Di Candia}}, \binits{R.}}:
\batitle{{Privacy in distributed quantum sensing with Gaussian quantum
  networks}}.
\bjtitle{npj Quantum Inf.}
(\byear{2026})
\doiurl{10.1038/s41534-026-01266-3}
{\href{https://arxiv.org/abs/2509.22103}{{arXiv:2509.22103}}}
\end{barticle}
\endbibitem

\bibitem[\protect\citeauthoryear{Rosati}{2024}]{Rosati2022}
\begin{barticle}
\bauthor{\bsnm{Rosati}, \binits{M.}}:
\batitle{{A learning theory for quantum photonic processors and beyond}}.
\bjtitle{Quantum}
\bvolume{8},
\bfpage{1433}
(\byear{2024})
\doiurl{10.22331/q-2024-08-08-1433}
{\href{https://arxiv.org/abs/2209.03075}{{arXiv:2209.03075}}}
\end{barticle}
\endbibitem

\bibitem[\protect\citeauthoryear{Rosati}{2021}]{Rosati2021}
\begin{bchapter}
\bauthor{\bsnm{Rosati}, \binits{M.}}:
\bctitle{{Performance of Coherent Frequency-Shift Keying for Classical
  Communication on Quantum Channels}}.
In: \bbtitle{2021 IEEE Int. Symp. Inf. Theory},
pp. \bfpage{902}--\blpage{905}.
\bpublisher{IEEE}, \blocation{???}
(\byear{2021}).
\doiurl{10.1109/ISIT45174.2021.9517959} .
\burl{https://ieeexplore.ieee.org/document/9517959/}
\end{bchapter}
\endbibitem

\bibitem[\protect\citeauthoryear{Barz et~al.}{2013}]{Barz2013}
\begin{barticle}
\bauthor{\bsnm{Barz}, \binits{S.}},
\bauthor{\bsnm{Fitzsimons}, \binits{J.F.}},
\bauthor{\bsnm{Kashefi}, \binits{E.}},
\bauthor{\bsnm{Walther}, \binits{P.}}:
\batitle{{Experimental verification of quantum computation}}.
\bjtitle{Nat. Phys.}
\bvolume{9}(\bissue{11}),
\bfpage{727}--\blpage{731}
(\byear{2013})
\doiurl{10.1038/nphys2763}
\end{barticle}
\endbibitem

\bibitem[\protect\citeauthoryear{Polacchi et~al.}{2023}]{Polacchi2023}
\begin{barticle}
\bauthor{\bsnm{Polacchi}, \binits{B.}},
\bauthor{\bsnm{Leichtle}, \binits{D.}},
\bauthor{\bsnm{Limongi}, \binits{L.}},
\bauthor{\bsnm{Carvacho}, \binits{G.}},
\bauthor{\bsnm{Milani}, \binits{G.}},
\bauthor{\bsnm{Spagnolo}, \binits{N.}},
\bauthor{\bsnm{Kaplan}, \binits{M.}},
\bauthor{\bsnm{Sciarrino}, \binits{F.}},
\bauthor{\bsnm{Kashefi}, \binits{E.}}:
\batitle{{Multi-client distributed blind quantum computation with the Qline
  architecture}}.
\bjtitle{Nat. Commun.}
\bvolume{14}(\bissue{1}),
\bfpage{7743}
(\byear{2023})
\doiurl{10.1038/s41467-023-43617-0}
\end{barticle}
\endbibitem

\bibitem[\protect\citeauthoryear{Wei et~al.}{2025}]{Wei2025}
\begin{barticle}
\bauthor{\bsnm{Wei}, \binits{Y.-C.}},
\bauthor{\bsnm{Stas}, \binits{P.-J.}},
\bauthor{\bsnm{Suleymanzade}, \binits{A.}},
\bauthor{\bsnm{Baranes}, \binits{G.}},
\bauthor{\bsnm{Machado}, \binits{F.}},
\bauthor{\bsnm{Huan}, \binits{Y.Q.}},
\bauthor{\bsnm{Knaut}, \binits{C.M.}},
\bauthor{\bsnm{Ding}, \binits{S.W.}},
\bauthor{\bsnm{Merz}, \binits{M.}},
\bauthor{\bsnm{Knall}, \binits{E.N.}},
\bauthor{\bsnm{Yazlar}, \binits{U.}},
\bauthor{\bsnm{Sirotin}, \binits{M.}},
\bauthor{\bsnm{Wang}, \binits{I.W.}},
\bauthor{\bsnm{Machielse}, \binits{B.}},
\bauthor{\bsnm{Yelin}, \binits{S.F.}},
\bauthor{\bsnm{Borregaard}, \binits{J.}},
\bauthor{\bsnm{Park}, \binits{H.}},
\bauthor{\bsnm{Lon{\v{c}}ar}, \binits{M.}},
\bauthor{\bsnm{Lukin}, \binits{M.D.}}:
\batitle{{Universal distributed blind quantum computing with solid-state
  qubits}}.
\bjtitle{Science (80-. ).}
\bvolume{388}(\bissue{6746}),
\bfpage{509}--\blpage{513}
(\byear{2025})
\doiurl{10.1126/science.adu6894}
\end{barticle}
\endbibitem

\bibitem[\protect\citeauthoryear{Dong et~al.}{2023}]{Dong2023}
\begin{barticle}
\bauthor{\bsnm{Dong}, \binits{Y.}},
\bauthor{\bsnm{Liu}, \binits{F.}},
\bauthor{\bsnm{Xiong}, \binits{Y.}}:
\batitle{{Joint Receiver Design for Integrated Sensing and Communications}}.
\bjtitle{IEEE Commun. Lett.}
\bvolume{27}(\bissue{7}),
\bfpage{1854}--\blpage{1858}
(\byear{2023})
\doiurl{10.1109/LCOMM.2023.3274295}
{\href{https://arxiv.org/abs/2211.05535}{{arXiv:2211.05535}}}
\end{barticle}
\endbibitem

\bibitem[\protect\citeauthoryear{Chen et~al.}{2025}]{Chen2025a}
\begin{barticle}
\bauthor{\bsnm{Chen}, \binits{X.Q.}},
\bauthor{\bsnm{Zhang}, \binits{L.}},
\bauthor{\bsnm{Zheng}, \binits{Y.N.}},
\bauthor{\bsnm{Liu}, \binits{S.}},
\bauthor{\bsnm{Huang}, \binits{Z.R.}},
\bauthor{\bsnm{Liang}, \binits{J.C.}},
\bauthor{\bsnm{{Di Renzo}}, \binits{M.}},
\bauthor{\bsnm{Galdi}, \binits{V.}},
\bauthor{\bsnm{Cui}, \binits{T.J.}}:
\batitle{{Integrated sensing and communication based on space-time-coding
  metasurfaces}}.
\bjtitle{Nat. Commun.}
\bvolume{16}(\bissue{1}),
\bfpage{1836}
(\byear{2025})
\doiurl{10.1038/s41467-025-57137-6}
\end{barticle}
\endbibitem

\bibitem[\protect\citeauthoryear{Tao et~al.}{2026}]{Tao2026}
\begin{barticle}
\bauthor{\bsnm{Tao}, \binits{Y.}},
\bauthor{\bsnm{Feng}, \binits{H.}},
\bauthor{\bsnm{Fang}, \binits{Y.}},
\bauthor{\bsnm{Xie}, \binits{X.}},
\bauthor{\bsnm{Zeng}, \binits{Y.}},
\bauthor{\bsnm{Wu}, \binits{Y.}},
\bauthor{\bsnm{Ge}, \binits{T.}},
\bauthor{\bsnm{Zhang}, \binits{Y.}},
\bauthor{\bsnm{Chen}, \binits{Z.}},
\bauthor{\bsnm{Tao}, \binits{Z.}},
\bauthor{\bsnm{Xu}, \binits{J.}},
\bauthor{\bsnm{Shu}, \binits{H.}},
\bauthor{\bsnm{Wang}, \binits{X.}},
\bauthor{\bsnm{Yu}, \binits{X.}},
\bauthor{\bsnm{Wang}, \binits{C.}}:
\batitle{{Integrated photonic ultrawideband real-time spectrum sensing for 6G
  wireless networks}}.
\bjtitle{Nat. Commun.}
\bvolume{17}(\bissue{1}),
\bfpage{3666}
(\byear{2026})
\doiurl{10.1038/s41467-026-70389-0}
\end{barticle}
\endbibitem

\bibitem[\protect\citeauthoryear{Stas et~al.}{2026}]{Stas2026}
\begin{barticle}
\bauthor{\bsnm{Stas}, \binits{P.-J.}},
\bauthor{\bsnm{Wei}, \binits{Y.-C.}},
\bauthor{\bsnm{Sirotin}, \binits{M.}},
\bauthor{\bsnm{Huan}, \binits{Y.Q.}},
\bauthor{\bsnm{Yazlar}, \binits{U.}},
\bauthor{\bsnm{{Abdo Arias}}, \binits{F.}},
\bauthor{\bsnm{Knyazev}, \binits{E.}},
\bauthor{\bsnm{Baranes}, \binits{G.}},
\bauthor{\bsnm{Machielse}, \binits{B.}},
\bauthor{\bsnm{Grandi}, \binits{S.}},
\bauthor{\bsnm{Riedel}, \binits{D.}},
\bauthor{\bsnm{Borregaard}, \binits{J.}},
\bauthor{\bsnm{Park}, \binits{H.}},
\bauthor{\bsnm{Lon{\v{c}}ar}, \binits{M.}},
\bauthor{\bsnm{Suleymanzade}, \binits{A.}},
\bauthor{\bsnm{Lukin}, \binits{M.D.}}:
\batitle{{Entanglement-assisted non-local optical interferometry in a quantum
  network}}.
\bjtitle{Nature}
\bvolume{651}(\bissue{8105}),
\bfpage{326}--\blpage{332}
(\byear{2026})
\doiurl{10.1038/s41586-026-10171-w}
\end{barticle}
\endbibitem

\bibitem[\protect\citeauthoryear{Novikov et~al.}{2025}]{Novikov2025}
\begin{barticle}
\bauthor{\bsnm{Novikov}, \binits{V.}},
\bauthor{\bsnm{Jia}, \binits{J.}},
\bauthor{\bsnm{Brasil}, \binits{T.B.}},
\bauthor{\bsnm{Grimaldi}, \binits{A.}},
\bauthor{\bsnm{Bocoum}, \binits{M.}},
\bauthor{\bsnm{Balabas}, \binits{M.}},
\bauthor{\bsnm{M{\"{u}}ller}, \binits{J.H.}},
\bauthor{\bsnm{Zeuthen}, \binits{E.}},
\bauthor{\bsnm{Polzik}, \binits{E.S.}}:
\batitle{{Hybrid quantum network for sensing in the acoustic frequency range}}.
\bjtitle{Nature}
\bvolume{643}(\bissue{8073}),
\bfpage{955}--\blpage{960}
(\byear{2025})
\doiurl{10.1038/s41586-025-09224-3}
\end{barticle}
\endbibitem

\bibitem[\protect\citeauthoryear{Crawford et~al.}{2025}]{Crawford2025}
\begin{barticle}
\bauthor{\bsnm{Crawford}, \binits{S.E.}},
\bauthor{\bsnm{Lander}, \binits{G.R.}},
\bauthor{\bsnm{Paudel}, \binits{H.P.}},
\bauthor{\bsnm{Slot}, \binits{M.R.}},
\bauthor{\bsnm{Lalam}, \binits{N.}},
\bauthor{\bsnm{Wuenschell}, \binits{J.}},
\bauthor{\bsnm{Pingree}, \binits{R.}},
\bauthor{\bsnm{Oueid}, \binits{R.}},
\bauthor{\bsnm{Wright}, \binits{R.}},
\bauthor{\bsnm{Buric}, \binits{M.}},
\bauthor{\bsnm{Brister}, \binits{M.M.}},
\bauthor{\bsnm{Duan}, \binits{Y.}}:
\batitle{{Quantum sensing for emerging energy technologies}}.
\bjtitle{Nat. Rev. Clean Technol.}
\bvolume{1}(\bissue{12}),
\bfpage{861}--\blpage{876}
(\byear{2025})
\doiurl{10.1038/s44359-025-00112-7}
\end{barticle}
\endbibitem

\bibitem[\protect\citeauthoryear{Aslam et~al.}{2023}]{Aslam2023}
\begin{barticle}
\bauthor{\bsnm{Aslam}, \binits{N.}},
\bauthor{\bsnm{Zhou}, \binits{H.}},
\bauthor{\bsnm{Urbach}, \binits{E.K.}},
\bauthor{\bsnm{Turner}, \binits{M.J.}},
\bauthor{\bsnm{Walsworth}, \binits{R.L.}},
\bauthor{\bsnm{Lukin}, \binits{M.D.}},
\bauthor{\bsnm{Park}, \binits{H.}}:
\batitle{{Quantum sensors for biomedical applications}}.
\bjtitle{Nat. Rev. Phys.}
\bvolume{5}(\bissue{3}),
\bfpage{157}--\blpage{169}
(\byear{2023})
\doiurl{10.1038/s42254-023-00558-3}
\end{barticle}
\endbibitem

\bibitem[\protect\citeauthoryear{Stray et~al.}{2022}]{Stray2022}
\begin{barticle}
\bauthor{\bsnm{Stray}, \binits{B.}},
\bauthor{\bsnm{Lamb}, \binits{A.}},
\bauthor{\bsnm{Kaushik}, \binits{A.}},
\bauthor{\bsnm{Vovrosh}, \binits{J.}},
\bauthor{\bsnm{Rodgers}, \binits{A.}},
\bauthor{\bsnm{Winch}, \binits{J.}},
\bauthor{\bsnm{Hayati}, \binits{F.}},
\bauthor{\bsnm{Boddice}, \binits{D.}},
\bauthor{\bsnm{Stabrawa}, \binits{A.}},
\bauthor{\bsnm{Niggebaum}, \binits{A.}},
\bauthor{\bsnm{Langlois}, \binits{M.}},
\bauthor{\bsnm{Lien}, \binits{Y.-H.}},
\bauthor{\bsnm{Lellouch}, \binits{S.}},
\bauthor{\bsnm{Roshanmanesh}, \binits{S.}},
\bauthor{\bsnm{Ridley}, \binits{K.}},
\bauthor{\bsnm{Villiers}, \binits{G.}},
\bauthor{\bsnm{Brown}, \binits{G.}},
\bauthor{\bsnm{Cross}, \binits{T.}},
\bauthor{\bsnm{Tuckwell}, \binits{G.}},
\bauthor{\bsnm{Faramarzi}, \binits{A.}},
\bauthor{\bsnm{Metje}, \binits{N.}},
\bauthor{\bsnm{Bongs}, \binits{K.}},
\bauthor{\bsnm{Holynski}, \binits{M.}}:
\batitle{{Quantum sensing for gravity cartography}}.
\bjtitle{Nature}
\bvolume{602}(\bissue{7898}),
\bfpage{590}--\blpage{594}
(\byear{2022})
\doiurl{10.1038/s41586-021-04315-3}
\end{barticle}
\endbibitem

\bibitem[\protect\citeauthoryear{K{\'{o}}m{\'{a}}r et~al.}{2014}]{Komar2013}
\begin{barticle}
\bauthor{\bsnm{K{\'{o}}m{\'{a}}r}, \binits{P.}},
\bauthor{\bsnm{Kessler}, \binits{E.M.}},
\bauthor{\bsnm{Bishof}, \binits{M.}},
\bauthor{\bsnm{Jiang}, \binits{L.}},
\bauthor{\bsnm{S{\o}rensen}, \binits{A.S.}},
\bauthor{\bsnm{Ye}, \binits{J.}},
\bauthor{\bsnm{Lukin}, \binits{M.D.}}:
\batitle{{A quantum network of clocks}}.
\bjtitle{Nat. Phys.}
\bvolume{10}(\bissue{8}),
\bfpage{582}--\blpage{587}
(\byear{2014})
\doiurl{10.1038/nphys3000}
{\href{https://arxiv.org/abs/1310.6045}{{arXiv:1310.6045}}}
\end{barticle}
\endbibitem

\bibitem[\protect\citeauthoryear{Giovannetti et~al.}{2011}]{Giovannetti2011}
\begin{barticle}
\bauthor{\bsnm{Giovannetti}, \binits{V.}},
\bauthor{\bsnm{Lloyd}, \binits{S.}},
\bauthor{\bsnm{Maccone}, \binits{L.}}:
\batitle{{Advances in quantum metrology}}.
\bjtitle{Nat. Photonics}
\bvolume{5}(\bissue{4}),
\bfpage{222}--\blpage{229}
(\byear{2011})
\doiurl{10.1038/nphoton.2011.35}
\end{barticle}
\endbibitem

\bibitem[\protect\citeauthoryear{Pirandola et~al.}{2018}]{Pirandola2018a}
\begin{barticle}
\bauthor{\bsnm{Pirandola}, \binits{S.}},
\bauthor{\bsnm{Bardhan}, \binits{B.R.}},
\bauthor{\bsnm{Gehring}, \binits{T.}},
\bauthor{\bsnm{Weedbrook}, \binits{C.}},
\bauthor{\bsnm{Lloyd}, \binits{S.}}:
\batitle{{Advances in photonic quantum sensing}}.
\bjtitle{Nat. Photonics}
\bvolume{12}(\bissue{12}),
\bfpage{724}--\blpage{733}
(\byear{2018})
\doiurl{10.1038/s41566-018-0301-6}
{\href{https://arxiv.org/abs/1811.01969}{{arXiv:1811.01969}}}
\end{barticle}
\endbibitem

\bibitem[\protect\citeauthoryear{Marciniak et~al.}{2022}]{Marciniak2022}
\begin{barticle}
\bauthor{\bsnm{Marciniak}, \binits{C.D.}},
\bauthor{\bsnm{Feldker}, \binits{T.}},
\bauthor{\bsnm{Pogorelov}, \binits{I.}},
\bauthor{\bsnm{Kaubruegger}, \binits{R.}},
\bauthor{\bsnm{Vasilyev}, \binits{D.V.}},
\bauthor{\bsnm{Bijnen}, \binits{R.}},
\bauthor{\bsnm{Schindler}, \binits{P.}},
\bauthor{\bsnm{Zoller}, \binits{P.}},
\bauthor{\bsnm{Blatt}, \binits{R.}},
\bauthor{\bsnm{Monz}, \binits{T.}}:
\batitle{{Optimal metrology with programmable quantum sensors}}.
\bjtitle{Nature}
\bvolume{603}(\bissue{7902}),
\bfpage{604}--\blpage{609}
(\byear{2022})
\doiurl{10.1038/s41586-022-04435-4}
\end{barticle}
\endbibitem

\bibitem[\protect\citeauthoryear{Portmann and Renner}{2022}]{Portmann2022}
\begin{barticle}
\bauthor{\bsnm{Portmann}, \binits{C.}},
\bauthor{\bsnm{Renner}, \binits{R.}}:
\batitle{{Security in quantum cryptography}}.
\bjtitle{Rev. Mod. Phys.}
\bvolume{94}(\bissue{2}),
\bfpage{025008}
(\byear{2022})
\doiurl{10.1103/RevModPhys.94.025008}
\end{barticle}
\endbibitem

\bibitem[\protect\citeauthoryear{Pirandola et~al.}{2020}]{Pirandola2020}
\begin{barticle}
\bauthor{\bsnm{Pirandola}, \binits{S.}},
\bauthor{\bsnm{Andersen}, \binits{U.L.}},
\bauthor{\bsnm{Banchi}, \binits{L.}},
\bauthor{\bsnm{Berta}, \binits{M.}},
\bauthor{\bsnm{Bunandar}, \binits{D.}},
\bauthor{\bsnm{Colbeck}, \binits{R.}},
\bauthor{\bsnm{Englund}, \binits{D.}},
\bauthor{\bsnm{Gehring}, \binits{T.}},
\bauthor{\bsnm{Lupo}, \binits{C.}},
\bauthor{\bsnm{Ottaviani}, \binits{C.}},
\bauthor{\bsnm{Pereira}, \binits{J.L.}},
\bauthor{\bsnm{Razavi}, \binits{M.}},
\bauthor{\bsnm{{Shamsul Shaari}}, \binits{J.}},
\bauthor{\bsnm{Tomamichel}, \binits{M.}},
\bauthor{\bsnm{Usenko}, \binits{V.C.}},
\bauthor{\bsnm{Vallone}, \binits{G.}},
\bauthor{\bsnm{Villoresi}, \binits{P.}},
\bauthor{\bsnm{Wallden}, \binits{P.}}:
\batitle{{Advances in quantum cryptography}}.
\bjtitle{Adv. Opt. Photonics}
\bvolume{12}(\bissue{4}),
\bfpage{1012}
(\byear{2020})
\doiurl{10.1364/AOP.361502}
{\href{https://arxiv.org/abs/1906.01645}{{arXiv:1906.01645}}}
\end{barticle}
\endbibitem

\bibitem[\protect\citeauthoryear{Renner}{2007}]{Renner2007}
\begin{barticle}
\bauthor{\bsnm{Renner}, \binits{R.}}:
\batitle{{Symmetry of large physical systems implies independence of
  subsystems}}.
\bjtitle{Nat. Phys.}
\bvolume{3}(\bissue{9}),
\bfpage{645}--\blpage{649}
(\byear{2007})
\doiurl{10.1038/nphys684}
{\href{https://arxiv.org/abs/0703069}{{arXiv:0703069}}}
{[quant-ph]}
\end{barticle}
\endbibitem

\bibitem[\protect\citeauthoryear{Yin et~al.}{2020}]{Yin2020a}
\begin{barticle}
\bauthor{\bsnm{Yin}, \binits{J.}},
\bauthor{\bsnm{Li}, \binits{Y.-H.}},
\bauthor{\bsnm{Liao}, \binits{S.-K.}},
\bauthor{\bsnm{Yang}, \binits{M.}},
\bauthor{\bsnm{Cao}, \binits{Y.}},
\bauthor{\bsnm{Zhang}, \binits{L.}},
\bauthor{\bsnm{Ren}, \binits{J.-G.}},
\bauthor{\bsnm{Cai}, \binits{W.-Q.}},
\bauthor{\bsnm{Liu}, \binits{W.-Y.}},
\bauthor{\bsnm{Li}, \binits{S.-L.}},
\bauthor{\bsnm{Shu}, \binits{R.}},
\bauthor{\bsnm{Huang}, \binits{Y.-M.}},
\bauthor{\bsnm{Deng}, \binits{L.}},
\bauthor{\bsnm{Li}, \binits{L.}},
\bauthor{\bsnm{Zhang}, \binits{Q.}},
\bauthor{\bsnm{Liu}, \binits{N.-L.}},
\bauthor{\bsnm{Chen}, \binits{Y.-A.}},
\bauthor{\bsnm{Lu}, \binits{C.-Y.}},
\bauthor{\bsnm{Wang}, \binits{X.-B.}},
\bauthor{\bsnm{Xu}, \binits{F.}},
\bauthor{\bsnm{Wang}, \binits{J.-Y.}},
\bauthor{\bsnm{Peng}, \binits{C.-Z.}},
\bauthor{\bsnm{Ekert}, \binits{A.K.}},
\bauthor{\bsnm{Pan}, \binits{J.-W.}}:
\batitle{{Entanglement-based secure quantum cryptography over 1,120
  kilometres}}.
\bjtitle{Nature}
\bvolume{582}(\bissue{7813}),
\bfpage{501}--\blpage{505}
(\byear{2020})
\doiurl{10.1038/s41586-020-2401-y}
\end{barticle}
\endbibitem

\bibitem[\protect\citeauthoryear{Para{\"{i}}so et~al.}{2021}]{Paraiso2021a}
\begin{barticle}
\bauthor{\bsnm{Para{\"{i}}so}, \binits{T.K.}},
\bauthor{\bsnm{Roger}, \binits{T.}},
\bauthor{\bsnm{Marangon}, \binits{D.G.}},
\bauthor{\bsnm{{De Marco}}, \binits{I.}},
\bauthor{\bsnm{Sanzaro}, \binits{M.}},
\bauthor{\bsnm{Woodward}, \binits{R.I.}},
\bauthor{\bsnm{Dynes}, \binits{J.F.}},
\bauthor{\bsnm{Yuan}, \binits{Z.}},
\bauthor{\bsnm{Shields}, \binits{A.J.}}:
\batitle{{A photonic integrated quantum secure communication system}}.
\bjtitle{Nat. Photonics}
\bvolume{15}(\bissue{11}),
\bfpage{850}--\blpage{856}
(\byear{2021})
\doiurl{10.1038/s41566-021-00873-0}
\end{barticle}
\endbibitem

\bibitem[\protect\citeauthoryear{Nadlinger et~al.}{2022}]{Nadlinger2022}
\begin{barticle}
\bauthor{\bsnm{Nadlinger}, \binits{D.P.}},
\bauthor{\bsnm{Drmota}, \binits{P.}},
\bauthor{\bsnm{Nichol}, \binits{B.C.}},
\bauthor{\bsnm{Araneda}, \binits{G.}},
\bauthor{\bsnm{Main}, \binits{D.}},
\bauthor{\bsnm{Srinivas}, \binits{R.}},
\bauthor{\bsnm{Lucas}, \binits{D.M.}},
\bauthor{\bsnm{Ballance}, \binits{C.J.}},
\bauthor{\bsnm{Ivanov}, \binits{K.}},
\bauthor{\bsnm{Tan}, \binits{E.Y.-Z.}},
\bauthor{\bsnm{Sekatski}, \binits{P.}},
\bauthor{\bsnm{Urbanke}, \binits{R.L.}},
\bauthor{\bsnm{Renner}, \binits{R.}},
\bauthor{\bsnm{Sangouard}, \binits{N.}},
\bauthor{\bsnm{Bancal}, \binits{J.-D.}}:
\batitle{{Experimental quantum key distribution certified by Bell's theorem}}.
\bjtitle{Nature}
\bvolume{607}(\bissue{7920}),
\bfpage{682}--\blpage{686}
(\byear{2022})
\doiurl{10.1038/s41586-022-04941-5}
\end{barticle}
\endbibitem

\bibitem[\protect\citeauthoryear{Kianvash et~al.}{2026}]{Kianvash2025a}
\begin{barticle}
\bauthor{\bsnm{Kianvash}, \binits{F.}},
\bauthor{\bsnm{Barbieri}, \binits{M.}},
\bauthor{\bsnm{Rosati}, \binits{M.}}:
\batitle{{Private Remote Phase Estimation over a Lossy Quantum Channel}}.
\bjtitle{Phys. Rev. Lett.}
\bvolume{136}(\bissue{6}),
\bfpage{060805}
(\byear{2026})
\doiurl{10.1103/rhgw-t21z}
{\href{https://arxiv.org/abs/2511.09123}{{arXiv:2511.09123}}}
\end{barticle}
\endbibitem

\bibitem[\protect\citeauthoryear{Moore and Dunningham}{2023}]{Moore2023}
\begin{botherref}
\oauthor{\bsnm{Moore}, \binits{S.W.}},
\oauthor{\bsnm{Dunningham}, \binits{J.A.}}:
{Secure quantum remote sensing without entanglement}.
AVS Quantum Sci.
\textbf{5}(1)
(2023)
\doiurl{10.1116/5.0137260}
{\href{https://arxiv.org/abs/2302.03617}{{arXiv:2302.03617}}}
\end{botherref}
\endbibitem

\bibitem[\protect\citeauthoryear{Bizzarri et~al.}{2026}]{Bizzarri2025a}
\begin{barticle}
\bauthor{\bsnm{Bizzarri}, \binits{G.}},
\bauthor{\bsnm{Barbieri}, \binits{M.}},
\bauthor{\bsnm{Manrique}, \binits{M.}},
\bauthor{\bsnm{Parisi}, \binits{M.}},
\bauthor{\bsnm{Bruni}, \binits{F.}},
\bauthor{\bsnm{Gianani}, \binits{I.}},
\bauthor{\bsnm{Rosati}, \binits{M.}}:
\batitle{{Faithful and secure distributed quantum sensing under
  general-coherent attacks}}.
\bjtitle{npj Quantum Inf.}
\bvolume{12}(\bissue{1}),
\bfpage{33}
(\byear{2026})
\doiurl{10.1038/s41534-025-01180-0}
{\href{https://arxiv.org/abs/2505.02620}{{arXiv:2505.02620}}}
\end{barticle}
\endbibitem

\bibitem[\protect\citeauthoryear{Moore and Dunningham}{2025}]{Moore2025}
\begin{barticle}
\bauthor{\bsnm{Moore}, \binits{S.W.}},
\bauthor{\bsnm{Dunningham}, \binits{J.A.}}:
\batitle{{Secure quantum-enhanced measurements on a network of sensors}}.
\bjtitle{Phys. Rev. A}
\bvolume{111}(\bissue{1}),
\bfpage{012616}
(\byear{2025})
\doiurl{10.1103/PhysRevA.111.012616}
\end{barticle}
\endbibitem

\bibitem[\protect\citeauthoryear{Ho et~al.}{2025}]{Ho2024}
\begin{botherref}
\oauthor{\bsnm{Ho}, \binits{J.}},
\oauthor{\bsnm{Webb}, \binits{J.W.}},
\oauthor{\bsnm{Brooks}, \binits{R.M.J.}},
\oauthor{\bsnm{Grasselli}, \binits{F.}},
\oauthor{\bsnm{Gauger}, \binits{E.}},
\oauthor{\bsnm{Fedrizzi}, \binits{A.}}:
{Quantum-private distributed sensing}
(2025)
{\href{https://arxiv.org/abs/2410.00970}{{arXiv:2410.00970}}}
\end{botherref}
\endbibitem

\bibitem[\protect\citeauthoryear{Huang et~al.}{2019}]{Huang2019a}
\begin{barticle}
\bauthor{\bsnm{Huang}, \binits{Z.}},
\bauthor{\bsnm{Macchiavello}, \binits{C.}},
\bauthor{\bsnm{Maccone}, \binits{L.}}:
\batitle{{Cryptographic quantum metrology}}.
\bjtitle{Phys. Rev. A}
\bvolume{99}(\bissue{2}),
\bfpage{022314}
(\byear{2019})
\doiurl{10.1103/PhysRevA.99.022314}
\end{barticle}
\endbibitem

\bibitem[\protect\citeauthoryear{Yin et~al.}{2019}]{Yin2019}
\begin{botherref}
\oauthor{\bsnm{Yin}, \binits{P.}},
\oauthor{\bsnm{Takeuchi}, \binits{Y.}},
\oauthor{\bsnm{Zhang}, \binits{W.-H.}},
\oauthor{\bsnm{Yin}, \binits{Z.-Q.}},
\oauthor{\bsnm{Matsuzaki}, \binits{Y.}},
\oauthor{\bsnm{Peng}, \binits{X.-X.}},
\oauthor{\bsnm{Xu}, \binits{X.-Y.}},
\oauthor{\bsnm{Xu}, \binits{J.-S.}},
\oauthor{\bsnm{Tang}, \binits{J.-S.}},
\oauthor{\bsnm{Zhou}, \binits{Z.-Q.}},
\oauthor{\bsnm{Chen}, \binits{G.}},
\oauthor{\bsnm{Li}, \binits{C.-F.}},
\oauthor{\bsnm{Guo}, \binits{G.-C.}}:
{Experimental demonstration of secure quantum remote sensing}
(2019)
\doiurl{10.1103/PhysRevApplied.14.014065}
{\href{https://arxiv.org/abs/1907.06480}{{arXiv:1907.06480}}}
\end{botherref}
\endbibitem

\bibitem[\protect\citeauthoryear{Shettell et~al.}{2022}]{Shettell2022a}
\begin{barticle}
\bauthor{\bsnm{Shettell}, \binits{N.}},
\bauthor{\bsnm{Kashefi}, \binits{E.}},
\bauthor{\bsnm{Markham}, \binits{D.}}:
\batitle{{Cryptographic approach to quantum metrology}}.
\bjtitle{Phys. Rev. A}
\bvolume{105}(\bissue{1}),
\bfpage{010401}
(\byear{2022})
\doiurl{10.1103/PhysRevA.105.L010401}
\end{barticle}
\endbibitem

\bibitem[\protect\citeauthoryear{Shor and Preskill}{2000}]{Shor2000}
\begin{barticle}
\bauthor{\bsnm{Shor}, \binits{P.W.}},
\bauthor{\bsnm{Preskill}, \binits{J.}}:
\batitle{{Simple Proof of Security of the BB84 Quantum Key Distribution
  Protocol}}.
\bjtitle{Phys. Rev. Lett.}
\bvolume{85}(\bissue{2}),
\bfpage{441}--\blpage{444}
(\byear{2000})
\doiurl{10.1103/PhysRevLett.85.441}
{\href{https://arxiv.org/abs/0003004}{{arXiv:0003004}}}
{[quant-ph]}
\end{barticle}
\endbibitem

\bibitem[\protect\citeauthoryear{Kraus et~al.}{2005}]{Kraus2005}
\begin{barticle}
\bauthor{\bsnm{Kraus}, \binits{B.}},
\bauthor{\bsnm{Gisin}, \binits{N.}},
\bauthor{\bsnm{Renner}, \binits{R.}}:
\batitle{{Lower and Upper Bounds on the Secret-Key Rate for Quantum Key
  Distribution Protocols Using One-Way Classical Communication}}.
\bjtitle{Phys. Rev. Lett.}
\bvolume{95}(\bissue{8}),
\bfpage{080501}
(\byear{2005})
\doiurl{10.1103/PhysRevLett.95.080501}
{\href{https://arxiv.org/abs/0410215}{{arXiv:0410215}}}
{[quant-ph]}
\end{barticle}
\endbibitem

\bibitem[\protect\citeauthoryear{Helstrom}{1976}]{helstromBOOK}
\begin{bbook}
\bauthor{\bsnm{Helstrom}, \binits{C.W.}}:
\bbtitle{{Quantum Detection and Estimation Theory}}
vol. \bseriesno{84},
p. \bfpage{309}.
\bpublisher{Academic press},
\blocation{New York}
(\byear{1976})
\end{bbook}
\endbibitem

\bibitem[\protect\citeauthoryear{Langford et~al.}{2005}]{Langford2005}
\begin{barticle}
\bauthor{\bsnm{Langford}, \binits{N.K.}},
\bauthor{\bsnm{Weinhold}, \binits{T.J.}},
\bauthor{\bsnm{Prevedel}, \binits{R.}},
\bauthor{\bsnm{Resch}, \binits{K.J.}},
\bauthor{\bsnm{Gilchrist}, \binits{A.}},
\bauthor{\bsnm{O'Brien}, \binits{J.L.}},
\bauthor{\bsnm{Pryde}, \binits{G.J.}},
\bauthor{\bsnm{White}, \binits{A.G.}}:
\batitle{{Demonstration of a Simple Entangling Optical Gate and Its Use in
  Bell-State Analysis}}.
\bjtitle{Phys. Rev. Lett.}
\bvolume{95}(\bissue{21}),
\bfpage{210504}
(\byear{2005})
\doiurl{10.1103/PhysRevLett.95.210504}
\end{barticle}
\endbibitem

\bibitem[\protect\citeauthoryear{Mukhopadhyay et~al.}{2025}]{Mukhopadhyay2025}
\begin{botherref}
\oauthor{\bsnm{Mukhopadhyay}, \binits{C.}},
\oauthor{\bsnm{Bayat}, \binits{A.}},
\oauthor{\bsnm{Montenegro}, \binits{V.}},
\oauthor{\bsnm{Paris}, \binits{M.G.A.}}:
{Beating joint quantum estimation limits with stepwise multiparameter
  metrology}
(2025)
{\href{https://arxiv.org/abs/2506.06075}{{arXiv:2506.06075}}}
\end{botherref}
\endbibitem

\end{thebibliography}
\end{document}